\declaretheorem{theorem}
\declaretheorem{proposition}
\declaretheorem{lemma}
\newcommand{\N}{\mathbb{N}}
\newcommand{\G}{\mathcal{G}}
\newcommand{\R}{\mathsf{R}}
\newcommand{\prefix}{\mathsf{prefix}}
\newcommand{\suffix}{\mathsf{suffix}}
\newcommand{\val}{\mathsf{val}}
\newcommand{\per}{\mathsf{per}}
\newcommand{\bottomnode}{\mathsf{bottom}}
\DeclareMathOperator{\poly}{poly}
\newcommand{\rank}{\mathsf{rank}}
\newcommand{\select}{\mathsf{select}}
\newcommand{\sa}{\mathsf{sa}}
\newcommand{\isa}{\mathsf{isa}}
\newcommand{\weight}{\mathsf{wt}}
\title{Pattern Matching on Grammar-Compressed Strings \\ in Linear Time}
\author[1]{Moses Ganardi}
\author[2]{Pawe\l~Gawrychowski}
\affil[1]{Max Planck Institute for Software Systems (MPI-SWS), Germany}
\affil[2]{University of Wroc{\l}aw, Poland}
\date{}
\date{\vspace{-.5cm}}
\begin{document}
\clearpage\maketitle\thispagestyle{empty}

\begin{abstract}
The most fundamental problem considered in algorithms for text processing is pattern matching: given a pattern
$p$ of length $m$ and a text $t$ of length $n$, does $p$ occur in $t$? Multiple versions of this basic question
have been considered, and by now we know algorithms that are fast both in practice and in theory. However,
the rapid increase in the amount of generated and stored data brings the need of designing algorithms that
operate directly on compressed representations of data. In the compressed pattern matching problem we are given
a compressed representation of the text, with $n$ being the length of the compressed representation and $N$
being the length of the text, and an uncompressed pattern of length $m$. The most challenging (and yet
relevant when working with highly repetitive data, say biological information) scenario is when the chosen
compression method is capable of describing a string of exponential length (in the size of its representation).
An elegant formalism for such a compression method is that of straight-line programs, which are simply
context-free grammars describing exactly one string. While it has been known that compressed pattern
matching problem can be solved in $O(m+n\log N)$ time for this compression method, designing a linear-time
algorithm remained open. We resolve this open question by presenting
an $O(n+m)$ time algorithm that, given a context-free grammar of size $n$ that produces a single string $t$
and a pattern $p$ of length $m$, decides whether $p$ occurs in $t$ as a substring.
To this end, we devise improved solutions for the weighted ancestor problem and the substring concatenation problem.
\end{abstract}

\newpage
\setcounter{page}{1}

\section{Introduction}

In this paper, a text is simply a sequence of characters over some finite alphabet, sometimes called a string.
A canonical example is a DNA sequence, which is a sequence of characters over $\{\texttt{A},\texttt{C},\texttt{G},\texttt{T}\}$.
The most fundamental computational question considered in the area on algorithms for text processing is pattern
matching: given a pattern $p$ of length $m$ and a text $t$ of length $n$, does $p$ occur in $t$?
In the most basic version, we seek exact occurrences, that is, contiguous fragments of $t$ equal to $p$.
This can be solved in $O(n+m)$ time by the classical Knuth-Morris-Pratt algorithm \cite{KnuthMP77}, but
multiple other algorithms have been designed, e.g.~\cite{FaroL13} lists over 50 different algorithms
published after 2000, and mentions that almost 40 algorithms have been presented earlier.
Thus, by now the exact pattern matching seems to be well-understood, and we have solutions that
are efficient both in theory and in practice.

However, large datasets are rarely stored in an uncompressed form. This is particularly the case
with biological data, which is very often rather repetitive. Among many families of compression methods,
the most interesting from a theoretical point of view are those that allow for an exponential decrease
in the size of the compressed representation, such as the Lempel-Ziv compression or the related grammar compression.
The Lempel-Ziv family of compression methods consists of multiple specific algorithms, but on a very high
level they are all based on partitioning the text into blocks, with each block being defined using
the already encoded prefix of the text. The related grammar compression has a particularly clean
definition: the text is described with a context-free grammar describing exactly one string (that is,
every nonterminal appears exactly once on the left side, and the right side of the corresponding production
contains only terminal symbols and nonterminals with larger indices). Such a grammar is often
called a {\em straight-line program}, or SLP for short. Without losing the generality, the grammar is assumed
to be in {\em Chomsky normal form}, i.e. all rules are of the form $A \to BC$ or $A \to a$ where $A,B,C$ are nonterminals and $a$ is a terminal symbol.

The abundance of large datasets stored in a compressed forms raises the challenge of designing
algorithms that operate directly on the compressed representation, without explicitly decompressing the
whole input. This brings us to the problem considered in this paper: compressed pattern matching.
In this problem, we are given a compressed representation of a text $t$ of length $N$, with $n$
being the size of the compressed representation, and an uncompressed pattern $p$ of length $m$,
and should decide if $p$ occurs in $t$.

This question has received quite a bit of attention in the past. Amir, Benson, and Farach~\cite{AmirBF96}
considered its complexity for the Lempel-Ziv-Welch compression method (a simpler and less powerful
variant of the general Lempel-Ziv method), and designed two algorithms with running time
$O(n\log m+m)$ and $O(n+m^{2})$. The latter has been soon improved to $O(n+m^{1+\epsilon})$~\cite{Kosaraju95}.
For the general Lempel-Ziv compression method (more specifically, the so-called LZ77),
Farach and Thorup~\cite{FarachT98} designed an $O(n\log^{2}(N/n)+m)$ time algorithm.
Later, Gawrychowski obtained a clean $O(n+m)$ time algorithms for the Lempel-Ziv-Welch
compression method~\cite{Gawrychowski13}, and improved the complexity for the LZ77
compression method to $O(n\log(N/n)+m)$~\cite{Gawrychowski11}.

The high-level idea of the algorithm of Gawrychowski~\cite{Gawrychowski11} for pattern matching
in LZ77 compressed text is as follows. By a result of Charikar et al.~\cite{CharikarLLPPSS05},
a Lempel-Ziv parse of size $n$ can be converted into a balanced SLP of size $O(n\log(N/n))$. This means that,
for every production $A\rightarrow BC$, we have $\frac{\alpha}{1-\alpha}\leq \frac{|B|}{|C|} \leq \frac{1-\alpha}{\alpha}$,
for some constant $0 < \alpha \le 1/2$,
where $|X|$ denotes the length of the (unique) string derived by $X$.
Then, using the fact that the grammar is balanced, we can detect an occurrence of the pattern
in total time $O(n\log(N/n)+m)$.

Arguably, the main objective in the area of exact pattern matching is to achieve clean linear
time complexity.
For self-referential Lempel-Ziv compression, we know that unless one allows constant-time integer
division $\Omega(n\log N+m)$ operations are necessary. However, this lower bound does not
apply for non self-referential Lempel-Ziv compression, and in particular does not exclude
the possibility of a $O(n+m)$ time algorithm for grammar compression.
However, the best upper bound for the case of grammar compression
was the same as for the general Lempel-Ziv compression,
that is, either $O(n\log N+m)$ when we do not allow constant-time integer division
or $O(n\log(N/n)+m)$ if we do.

Recently, Ganardi, Jeż, and Lohrey~\cite{GanardiJL19} showed how to transform in linear time
an SLP of size $n$ describing a string $s$ of length $N$ into an equivalent SLP of size $O(n)$ with derivation tree of depth $O(\log N)$.
Thus, now we can assume without losing generality that the given grammar has depth $O(\log N)$.
This makes it particularly simple to, say, implement random access in linear space and logarithmic time,
significantly simplifying the previously known technically nontrivial result of Bille et al.~\cite{BilleLRSSW15}.
Clearly, a balanced grammar has depth $O(\log N)$ but not vice versa. Nevertheless, this exciting
progress suggests that one should revisit the complexity of pattern matching on grammar-compressed
strings, and seek a linear-time algorithm for grammars with logarithmic depth, which would then
imply a clean linear-time algorithm for any grammar.

\paragraph{Our result.}
In this paper, we successfully tackle the challenge of designing a linear-time algorithm for
pattern matching on grammar-compressed strings, and prove the following.

\begin{theorem}
	Given a pattern $p$ of length $m$, and an SLP $\G$ of size $n$,
	we can decide whether $p$ occurs in the text described by $\G$ in time $O(n+m)$.
\end{theorem}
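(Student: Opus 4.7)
The plan is to follow the now-standard bottom-up framework for pattern matching on SLPs, but drive the work at each nonterminal down to $O(1)$. As preprocessing, I would apply the theorem of Ganardi, Je\.z, and Lohrey in $O(n)$ time to assume without loss of generality that $\G$ has size $O(n)$ and derivation tree of depth $O(\log N)$. In parallel, in $O(m)$ time I would build the suffix tree of $p$ equipped with the usual constant-time lowest common ancestor and longest common extension structures.

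With every nonterminal $A$ I would associate a \emph{signature}, recording the locus in the suffix tree corresponding to the longest prefix of $\val(A)$ that is a substring of $p$, together with the symmetric datum for suffixes. For a production $A \to BC$, the signature of $A$ is assembled from those of $B$ and $C$ using a constant number of weighted ancestor queries (to descend to the correct locus after concatenation) and substring concatenation queries (to test whether the concatenation of two known substrings of $p$ is itself a substring of $p$). A non-crossing occurrence of $p$ inside $\val(A)$ is witnessed the moment the signature reaches string-depth $m$.

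The critical step is detecting a \emph{crossing} occurrence at a cut $A \to BC$: we must decide whether $p$ admits a non-empty decomposition $p = p_1 p_2$ with $p_1$ a suffix of $\val(B)$ and $p_2$ a prefix of $\val(C)$. The classical approach exploits the fact that the set of prefixes of $p$ which are suffixes of a fixed string has a highly structured form (a union of $O(\log m)$ arithmetic progressions via periodicity). Combining this structure on the $B$-side and the $C$-side, and using the depth bound $O(\log N)$ guaranteed by Ganardi et al.\ to control how information is reused along heavy paths of the derivation tree, I would reduce the existential question to a bounded number of weighted ancestor and substring concatenation queries per cut. Summing $O(1)$ work over the $O(n)$ productions and adding the $O(m)$ preprocessing yields the claimed $O(n+m)$ bound.

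The main obstacle, and the genuine technical content hinted at in the abstract, is the design of the two primitives themselves: weighted ancestor queries on the suffix tree of $p$ and substring concatenation on $p$, each with $O(1)$ query time after only $O(m)$ preprocessing. Previous work achieved only slightly super-constant bounds for these primitives, and precisely this overhead is what prevented earlier grammar-based pattern matching algorithms from reaching linear time. Producing linear-preprocessing, constant-query versions of these two structures is the crux of the argument; once they are in place, the rest of the algorithm is a clean traversal of the $O(n)$-sized derivation tree with constant work per node.
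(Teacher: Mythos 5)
Your high-level framework is on the right track (Ganardi et al.\ balancing, bottom-up traversal, weighted ancestor and substring concatenation as primitives, periodicity for crossing occurrences), but you have mislocated the crux of the argument, and the gap is precisely where the paper's novelty lives.

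You assert that the main contribution is building \emph{online} weighted ancestor and substring concatenation structures with $O(m)$ preprocessing and $O(1)$ query time, and that the rest is a "clean traversal with constant work per node." The paper explicitly states it does \emph{not} have such a structure for substring concatenation (it cites Belazzougui et al.\ for weighted ancestor only, and says it is unclear how to extend it). Instead, the paper goes \emph{offline}: both Theorems \ref{thm:weightedancestors} and \ref{thm:substringconcat} give $O(q + m/w)$ time for a \emph{batch} of $q$ queries (plus one sort), which is not $O(1)$ per query. The algorithm then groups nonterminals into $O(\log N)$ layers by derivation-tree height; one batch per layer costs $O(|L_k| + m/w)$, and the sum over all layers is $O(n + m \log N / w) = O(n+m)$ since $w \geq \log N$. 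Without this batching-and-amortization step your accounting does not close, because the $m/w$ overhead cannot be charged against individual productions.

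A second, related gap: you propose to maintain at each nonterminal $A$ the \emph{exact} locus of the longest prefix of $\val(A)$ that is a substring of $p$. For $A \to BC$, if $\val(B) = s_B$ is a substring of $p$ and the $C$-side information is a concatenation $x_C y_C$, the exact longest prefix of $s_B x_C y_C$ that occurs in $p$ is not recoverable from a single substring concatenation query (which only tells you whether $uv$ occurs, not the longest matching prefix when it does not). The paper sidesteps this by maintaining only an \emph{approximation}: two substrings whose concatenation lies between $\prefix(A)$ and $\val(A)$ in the prefix order. This relaxed invariant is cheap to maintain (one concatenation query, never more than two pieces), and the paper then shows, via Propositions \ref{prop:comp-info} and \ref{prop:triples}, that this coarser information still suffices: crossing occurrences reduce to testing whether $p$ occurs in a concatenation of at most three explicit substrings of $p$, which is settled by a separate periodicity-based case analysis. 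You gesture at the periodicity structure but do not account for how the approximate (rather than exact) prefix/suffix information is reconciled with it.

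Finally, "heavy paths of the derivation tree" plays no role in the paper; heavy paths appear only inside the suffix-tree machinery for unrooted LCP queries. As written, the proposal would not yield $O(n+m)$ without the offline batching idea and the approximate-information invariant, which are the paper's actual contributions.
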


\noindent In the above theorem and the whole paper we assume the standard word RAM model, which operates on $w$-bit words,
where $w \ge \log N$ and $w \ge \log m$, with the standard arithmetic (excluding integer division) and bitwise operations.

\paragraph{Techniques and comparison with prior work.}

The first step in our solution is to apply the result of Ganardi et al.~\cite{GanardiJL19} to make
the depth of the grammar $O(\log N)$. Then, if there is an occurrence of the pattern $p[1..m]$ then
there exist a production $A\rightarrow BC$ such that $p[1..i]$ is a suffix of the string described
by $B$ while $p[(i+1)..m]$ is a prefix of the string described by $C$. Thus, the natural
approach is to check, for each nonterminal $A$ of $\mathcal{G}$, whether the string it describes is a substring of $p$,
and if not compute its longest prefix that is a suffix of $p$ and the longest suffix that is a prefix of $p$.
This was the approach taken in~\cite{Gawrychowski11}. With some insight related to combinatorics
on words, such information is enough to detect an occurrence in constant time per production,
see~\cite[Lemma 7]{Gawrychowski11}. However, computing the information bottom-up for each production separately
seems to require logarithmic time per nonterminal. This difficulty was overcome in~\cite{Gawrychowski11}
by processing multiple productions together, more specifically by batching together nonterminals
deriving strings of roughly the same length (up to constant factors), and computing just some approximation
of this information for each nonterminal.
An important property of a balanced grammar is that, after splitting the nonterminals into such layers, productions
for all nonterminals in the same layer refer to the nonterminals in a constant number of previous layers.
This was the key insight that allowed for processing all nonterminals in $O(|\mathcal{G}|+m)$ total time.
However, the balancing technique of Ganardi et al.~\cite{GanardiJL19} only guarantees that
the depth of $\mathcal{G}$ is logarithmic, which is not enough for such an approach to work%
\footnote{A recent result of Ganardi~\cite{ganardi2021compression} guarantees that the depth of every subtree
is logarithmic in the length of the derived string, but this is also not enough.}.
In fact, it was shown in~\cite{ganardi2021compression} that
any transformation of arbitrary SLPs into balanced SLPs (in the sense of Charikar et al.~\cite{CharikarLLPPSS05})
must incur a multiplicative blowup of $O(\log N)$ where $N$ is the string length.
Thus, we need to design a new algorithm.

Our improved solution is based on extending the combinatorial insight used in the prior work
and combining it with appropriate data structures. For the data structures part, we work with
the substring concatenation problem, which asks for preprocessing the pattern $p[1..m]$ to allow
for checking if the concatenation of any two of its substrings $p[i..j]p[i'..j']$ occurs in the whole $p$.
This is a basic building block in other algorithms, e.g. Amir et al.~\cite{AmirLLS07} designed
an $O(m\sqrt{\log m})$ space structure with $O(\log\log m)$ query time to solve some problems
on dynamic texts. Using a linear-space constant-time data structure for the so-called weighted
ancestor problem by Gawrychowski et al.~\cite{GawrychowskiLN14}, Bille et al.~\cite{BilleCCGSVV18}
obtained improved space-time tradeoffs for this problem. However, in this particular
application we would need a linear-space constant-time data structure that can be constructed
in linear time. Even though the very recent result of Belazzougui et al.~\cite{BelazzouguiKPR21}
does provide such a data structure for the weighted ancestor problem, it is not clear how
to extend it to the substring concatenation problem with the same time and space bounds.
Thus, we take another approach, and exploit the fact that in this case we can afford to
batch multiple queries together.

We present improved offline algorithms for the {\em weighted ancestor problem} and the {\em substring concatenation problem}.
In the weighted ancestor problem we are given a node-weighted tree.
The weights are nonnegative $w$-bit integers and strictly increasing on a path from the root to any node,
i.e. the weight of a node is greater than the weight of its parent.
A weighted ancestor query asks: Given a node $u$ and a number $k \in \N$,
return the furthest ancestor of $u$ with weight at least $k$.
We can assume that $u$ is a leaf since we can store pointers from every node to a descendant leaf.
Very recently, it was show that weighted ancestor queries on suffix trees can be answered in constant time
after linear time and space preprocessing~\cite{BelazzouguiKPR21},
which allows to find the node of a substring $u[i..j]$ in constant time.
We present another (simpler) offline solution that builds on the result
by Kociumaka et al.~\cite{KociumakaKRRW20}, who showed how to perform $q$ weighted ancestor queries in $O(q+s)$ time
on a general tree of size $s$, assuming that the queries are sorted by their weights.
In our application, we need to replace $s$ with (at most) $s/\log N$ in the time complexity.

\begin{restatable}{theorem}{weightedancestors}
	\label{thm:weightedancestors}
	A tree $T$ of size $s$ and weights up to $m$ can be preprocessed in $O(s)$ time
	so that $q$ weighted ancestor queries can be answered in time $O(q + s / w)$
	and one call to sorting $q$ integers up to $m$.
\end{restatable}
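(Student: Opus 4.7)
The plan is to reduce $T$ to a macro tree of size $O(s/w)$ via a micro-macro decomposition, apply the algorithm of Kociumaka et al.\ on the macro tree, and answer the ``local'' parts of queries inside each micro tree in $O(1)$ time using word-level operations.

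First, using a standard top-down greedy decomposition, I partition $T$ into micro trees of size at most $w$; the induced macro tree then has $O(s/w)$ nodes. For each micro tree $M$ with root $r_M$, I assign each node $v \in M$ a local rank $\rho(v) \in \{1, \ldots, |M|\}$ according to $\weight(v)$, build an $O(1)$-query predecessor structure (e.g., a fusion tree) over the weights of $M$ in $O(|M|)$ time, and by a DFS compute, for every $v \in M$, an ancestor-bitmask $B_v$ of $|M| \le w$ bits (fitting in one machine word) whose $i$-th bit is set iff the node of local rank $i$ is an ancestor of $v$ in $M$. Each macro node carries the weight of its micro-tree root; since weights strictly increase along every root-to-leaf path in $T$, the macro tree inherits this property. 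On each macro edge from a macro child to its parent $M'$, I also precompute the \emph{entry node} $e \in M'$, namely the parent in $T$ of the child's micro-tree root.

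To answer a query $(u, k)$, let $M_u$ be the micro tree containing $u$. If $k > \weight(r_{M_u})$, the answer lies strictly inside $M_u$: I obtain the rank $\rho^\star$ of the smallest weight of $M_u$ that is $\ge k$ via the fusion tree, mask $B_u$ to keep only bits at positions $\ge \rho^\star$, and take the lowest remaining set bit to identify the answer, all in $O(1)$ using standard word-RAM operations. Otherwise $\weight(r_{M_u}) \ge k$ and the answer is $r_{M_u}$ or higher; I issue a macro-tree weighted ancestor query from $M_u$'s macro node with threshold $k$. All such macro queries inherit the global sort by $k$, so the algorithm of Kociumaka et al.\ solves them in $O(q + s/w)$ total. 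Given the topmost macro ancestor $M^\star$ with $\weight(r_{M^\star}) \ge k$, I inspect its macro parent $M'$ (if $M^\star$ is not the macro root): using the micro-tree structure of $M'$, I query for the topmost ancestor of the entry node $e$ within $M'$ having weight $\ge k$; if such a node exists, it is the answer, otherwise the answer is $r_{M^\star}$.

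Preprocessing is $O(s)$: the decomposition is linear, each micro tree costs $O(|M|)$ for its fusion tree and bitmasks, and the macro setup is $O(s/w)$. Query processing is $O(1)$ per query inside micro trees plus $O(q + s/w)$ from Kociumaka on the macro tree, summing to $O(q + s/w)$. The main obstacle I anticipate is exhibiting a linear-construction, $O(1)$-query predecessor structure on $O(w)$-bit keys inside each micro tree; fusion trees do this but integrating the word-level details requires care. A minor but essential technical point is the macro-to-micro handoff: by precomputing entry nodes on each macro edge, each query triggers at most one additional micro-tree lookup in the parent micro tree $M'$, preserving the $O(1)$ per-query bound.
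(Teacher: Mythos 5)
Your high-level plan mirrors the paper's: shrink the tree to an $O(s/w)$-node structure, apply the Kociumaka et al.\ batched algorithm there, and answer the local parts of queries in $O(1)$ with a fusion-tree predecessor structure plus ancestor-bitmasks in each small piece. The micro-tree mechanics (local ranks, bitmask $B_v$ built incrementally along a DFS, masking and finding the lowest set bit at or above the predecessor rank) are essentially the paper's Lemma~\ref{lem:micro-wa}, and the ``entry node'' handoff is the analogue of the paper's residual step after the top-tree query.

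The flaw is in the decomposition. You claim that a greedy top-down \emph{partition} of $T$ into connected micro trees of size at most $w$ yields a macro tree with $O(s/w)$ nodes, but this is false in general. Take a star: a root with $s-1$ leaves. Any connected piece containing two or more leaves must contain the root, so the root lies in exactly one piece of size at most $w$, covering at most $w-1$ leaves; the remaining $s-w$ leaves must each be a singleton micro tree, giving $\Theta(s)$ macro nodes, not $O(s/w)$. Your macro-tree stage then costs $O(q+s)$ with Kociumaka et al., and the claimed $O(q+s/w)$ bound collapses. (The micro-macro decompositions in the literature that achieve $O(n/b)$ pieces of size $O(b)$ share boundary nodes between pieces; they are not partitions in your sense, and your ``parent in $T$ of the child's micro-tree root'' picture breaks.) The paper avoids this with an ART-style decomposition: pick every $w$-th node in preorder and close under LCA to form the top tree of size $O(s/w)$. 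The bottom trees hanging off the spines are each of size less than $w$, but there may be $\Theta(s)$ of them; that is fine because only the top tree feeds Kociumaka et al. The price is a third category of nodes that your partition never has to confront: nodes lying on a path in $T$ that has been contracted to a single top-tree edge. The paper shows each such path has length at most $w$ (a preorder counting argument) and answers the residual query with a per-edge predecessor structure. You would need to replace your partition with this kind of decomposition and add the per-edge step to make the argument go through.
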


A substring concatenation query on a string $p$ asks:
Given two substrings $u = p[i..j]$ and $v = p[k..\ell]$ of $p$, check whether $uv$ is a substring of $p$
and, if so, return the position of an occurrence.
We are not aware of a previous offline solution for this problem. In our application
it is crucial that the time is linear in the number of queries and sublinear in the length of the pattern.

\begin{restatable}{theorem}{substringconcat}
	\label{thm:substringconcat}
	The pattern $p$ of length $m$ can be preprocessed in $O(m)$ time
	so that $q$ substring concatenations can be answered in time $O(q + m / w)$.
\end{restatable}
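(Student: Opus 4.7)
The plan is to build the suffix tree $ST_p$ of $p$ together with the suffix array and its inverse in $O(m)$ time. Every substring $p[i..j]$ has a locus in $ST_p$, namely the explicit node reached by descending $j-i+1$ characters from the root along the path to the leaf for the suffix starting at position $i$, and its occurrences form a contiguous interval $[a,b]$ in the suffix array. A concatenation query $(u,v)=(p[i..j],p[k..\ell])$ therefore decomposes into two tasks: (i) find the suffix-array intervals $[a_u,b_u]$ and $[a_v,b_v]$ of $u$ and $v$; (ii) test whether some occurrence of $u$ is immediately followed by an occurrence of $v$, i.e.\ whether there is some $j'\in[a_u,b_u]$ with $\isa[\sa[j']+|u|]\in[a_v,b_v]$, in which case $\sa[j']$ is a valid witness position.

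Task (i) reduces directly to weighted ancestor queries on $ST_p$: the locus of $p[i..j]$ is the weighted ancestor at string-depth $j-i+1$ of the $i$-th leaf. Applying \Cref{thm:weightedancestors} to $ST_p$, which is a tree of size $O(m)$ with weights bounded by $m$, we answer the $2q$ resulting queries in total time $O(q+m/w)$, plus one integer sort of $O(q)$ values bounded by $m$, which itself fits within the $O(q+m/w)$ budget via bit-packed radix sort.

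Task (ii) will be the main obstacle, because the ``shift by $|u|$'' depends on the query and prevents a direct reduction to a static range-intersection data structure. The plan is a case analysis based on the period of $u$ (and symmetrically $v$): when $u$ is not highly periodic it has only $O(m/|u|)$ occurrences in $p$, so we enumerate them and test each against a bit-packed indicator of the occurrences of $v$; when $u$ is highly periodic, its occurrences form an arithmetic progression, and the test against the occurrences of $v$ reduces either to a constant-size arithmetic check (when $v$ is also highly periodic) or, symmetrically, to an enumeration over the $O(m/|v|)$ occurrences of $v$. The hard part will be to amortize these operations across the $q$ queries so that the additive overhead is only $O(m/w)$: we plan to sort the queries so that repeated scans over shared portions of the suffix array are avoided, and to encode the occurrence sets using a micro-macro decomposition of the suffix array into blocks of $\Theta(w)$ positions, so that each bit-packed lookup on an occurrence set costs $O(1)$ amortized after $O(m/w)$-time precomputation.
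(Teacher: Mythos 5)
Your reduction of task (i) to weighted ancestor queries aligns with the paper, and your identification of task (ii) as the hard part is correct, but your plan for task (ii) has a genuine gap. Your case analysis distinguishes highly periodic $u$ (with $\per(u) \le |u|/2$, occurrences forming an arithmetic progression) from the remaining $u$ (with at most $O(m/|u|)$ occurrences, which you propose to enumerate). However, $O(m/|u|)$ can be $\Theta(m)$ when $|u|=O(1)$: for instance $u = ab$ has $\per(u)=2=|u|>|u|/2$ yet occurs $\Theta(m)$ times in $p=(ab)^{m/2}$. A single such query already costs $\Theta(m)$ under your scheme, while the budget is $O(q+m/w)$, which for $q=O(1)$ is $O(m/w)$. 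The amortization you sketch does not repair this: distinct substrings $u$ have disjoint suffix-array intervals and disjoint occurrence sets, so sorting the queries does not produce shared scans, and a micro-macro block decomposition of $\sa$ speeds up individual membership tests but not the number of occurrences you must visit. The bit-packed indicator of occurrences of $v$ is also unnecessary — the test $\isa[\sa[j']+|u|]\in[a_v,b_v]$ is already $O(1)$ per occurrence via $\isa$ — the cost is the count of positions $j'$ to check, and that is what blows up.

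The paper avoids enumeration entirely, and its architecture is quite different from yours. It builds an ART-style $\log m$-decomposition of the suffix tree of $p$: a top tree over $m/\log m$ suffixes and bottom trees with fewer than $\log m$ leaves each. If the locus of $u$ lies in the top tree, the extension by $v$ is answered with a batched \emph{unrooted LCP query} there, implemented via heavy-path decomposition; the $\log$ factor that heavy paths introduce is exactly cancelled by the $\log m$ decomposition parameter, giving $O(q+m/w)$ in total. If the locus of $u$ lies in a bottom tree — equivalently, $u$ has fewer than $\log m$ occurrences — the paper decomposes $u=u' a u''$ with $u''$ the longest suffix of $u$ occurring at least $\log m$ times, and stores for each such $au''$ its fewer-than-$\log m$ occurrences as points $(\isa[j+1],\isa^\R[i-1])$ in a word-packed 2D range-emptiness structure (rank-reduced to a $[\log m]\times[\log m]$ grid encodable in $O(1)$ words), so the query for $u'$ and $v$ becomes a constant number of bit operations. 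To make your approach work you would need some replacement for this mechanism — a way to answer the shifted range-intersection test in $O(1)$ amortized without visiting each occurrence — and the periodicity case split by itself does not supply one.
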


\paragraph{Organisation of the paper.} 
We start with the preliminaries in \Cref{sec:preliminaries}.
We postpone the proofs of \Cref{thm:weightedancestors} and \Cref{thm:substringconcat}
to \Cref{sec:weightedancestors} and \Cref{sec:substringconcat}, and assume them as already
proved in \Cref{sec:main}, where we present the main algorithm.

\paragraph{Related work.}

In the {\em fully compressed pattern matching problem}, both the text and the pattern are given by straight-line programs.
This problem is known to be solvable in polynomial-time \cite{GasieniecKPR96,Jez15,KarpinskiRS95,Lifshits07,MiyazakiST97}
and the currently fastest solution is due to Je\.z \cite{Jez15}, with a running time of $O((n + m) \log M)$
where $n,m$ are the sizes of the given SLPs for the text and the pattern, respectively, and $M$ is the pattern length.
The latter solution uses the {\em recompression} technique, which has also been applied to
compressed membership problems for finite automata \cite{Jez14}, word equations \cite{Jez16}, equations in free groups \cite{DiekertJP16}, and context unification \cite{Jez19}.
Gąsieniec and Rytter also presented an $O((n+m) \log (n+m))$ time solution for the fully compressed pattern matching problem
for LZW-compressed strings \cite{GasieniecR99}, which was later improved to linear time by Gawrychowski \cite{Gawrychowski12}.
A closely related topic is the {\em compressed text indexing} problem where an index is a data structure
that supports efficient pattern matching queries on the text.
A good overview of recent results on compressed indices can be found in the excellent survey by Navarro~\cite{Navarro21}.

\section{Preliminaries}
\label{sec:preliminaries}

We write $[i..j]$ for $\{i, \dots, j\}$ and $[n]$ for $\{1, \dots, n\}$.
For a string $s = a_1 \dots a_n$ we write $s[i] = a_i$ for the $i$-th character.
A {\em substring} of a string $u$ is a pair $(i,j)$ where $1 \le i \le j \le |u|$ and is identified with the string $u[i..j] = u[i] u[i+1] \dots u[j]$.
We say that {\em $u$ occurs in $v$ at position $i$} if $u = v[i+1..i+|u|]$
\footnote{This definition of an occurrence at position $i$ simplifies formulas throughout the paper.}.
A {\em period} of a string $u$ is an integer $d \ge 1$ with $u[i] = u[i+d]$ for all $1 \le i \le |u|-d$.
The smallest period $\per(u)$ of $u$ is also called {\em the period} of $u$.
If $d = \per(u) \le |u|/2$ then the periodicity lemma \cite{fine1965uniqueness} implies that the set of all periods $\le |u|/2$
forms an arithmetic progression $\{ \alpha d \mid \alpha \ge 1 \} \cap [0..|u|/2]$.

The {\em compacted trie} $T$ of a set of strings $S$ is obtained from the trie of $S$
by contracting unary paths.
The nodes in $T$ are also called {\em explicit nodes},
whereas {\em implicit nodes} are positions on an edge label.
The {\em string depth} of a (explicit or implicit) node $v$ in $T$ is the length of the string labelling
the path from the root to $v$.
The {\em suffix tree} of a word $u$ is a compacted trie of all suffixes of $u \, \$$
where $\$$ is a fresh symbol.
Later in \Cref{sec:substringconcat} we will also consider compacted tries only containing some suffixes.

In this paper we always denote by $p$ the pattern of length $m$.
In all algorithms we assume the following data structures on the pattern.
In $O(m)$ time we build the suffix trees for $p$ and $p^\R$ \cite{Ukkonen95}.
We label every explicit node by its string depth.
Furthermore, by traversing all leaves we compute in linear time an array of length $m$ which maps a number $i$ to the leaf corresponding to the suffix $u[i..|u|]$.
We preprocess the suffix trees in linear time such that they support {\em least common ancestor} queries in constant time \cite{BenderF00}.
This allows us to compute longest common prefixes of substrings (lcp queries) in constant time.
By a depth first traversal of the suffix tree we also compute in $O(m)$ time
the {\em suffix array} $\sa[1..m]$ of $p$ and
the {\em inverse suffix array} $\isa[1..m]$ of $p$ where
$p[\sa[1]..m], p[\sa[2]..m], \dots, p[\sa[m]..m]$ is the lexicographically ordered list of suffixes of $p$,
and $\isa[i]$ is the lexicographic rank of $p[i..m]$ in the set of all suffixes of $p$ (position in this ordering).
Using the preprocessing of the Knuth-Morris-Pratt algorithm we can compute the periods of all prefixes of $p$
in linear time, see e.g. \cite[Lemma~3.3]{CrochemoreR94}.

\begin{lemma}
	\label{lem:prec-periods}
	One can compute the periods of all prefixes and suffixes of $p$ in $O(m)$ time.
\end{lemma}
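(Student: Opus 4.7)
The plan is to reduce the computation directly to the standard preprocessing step of the Knuth–Morris–Pratt algorithm applied twice, once to $p$ and once to its reverse $p^\R$.

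Recall that KMP preprocessing computes, in $O(m)$ time, the failure function $f[1..m]$, where $f[i]$ is the length of the longest proper border of $p[1..i]$ (i.e.\ the longest proper prefix of $p[1..i]$ that is also a suffix). A standard fact in combinatorics on words is that $\per(p[1..i]) = i - f[i]$: the existence of a border of length $\ell$ is equivalent to $i - \ell$ being a period, and the longest proper border corresponds to the smallest period. Thus running KMP on $p$ yields an array giving $\per(p[1..i])$ for every $i \in [1..m]$ in $O(m)$ time.

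For the suffixes, I apply the same procedure to the reverse string $p^\R$. A string has period $d$ iff its reverse has period $d$, so $\per(p[i..m]) = \per((p^\R)[1..m-i+1])$. Running KMP on $p^\R$ therefore produces $\per(p[i..m])$ for every $i \in [1..m]$ in $O(m)$ additional time.

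Combining the two passes gives the whole lemma in $O(m)$. There is no real obstacle here; this is essentially the classical observation underlying \cite[Lemma~3.3]{CrochemoreR94} already cited in the text, and the only thing worth stating explicitly is the identification $\per(p[1..i]) = i - f[i]$ and the reduction from suffixes of $p$ to prefixes of $p^\R$.
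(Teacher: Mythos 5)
Your proposal is correct and takes the same route the paper implies: the paper states the lemma immediately after citing the KMP failure-function preprocessing for prefix periods, and the extension to suffixes via the reverse string is exactly the intended (and standard) argument. Your explicit identification $\per(p[1..i]) = i - f[i]$ and the observation that reversal preserves periods fill in precisely the details the paper leaves implicit.
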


Let $\prefix(u)$ be the longest prefix of $u$ which is a suffix of $p$,
and let $\suffix(u)$ be the longest suffix of $u$ which is a prefix of $p$.

\begin{lemma}
	\label{lem:pref-suf}
	The pattern $p$ can be preprocessed in $O(m)$ time, such that
	given substrings $u_1, \dots, u_q$ of $p$, one can compute $\prefix(u_i)$ and $\suffix(u_i)$ for all $1 \le i \le q$ in time $O(q + m/w)$.
\end{lemma}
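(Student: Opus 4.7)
The plan is to reduce the computation of $\suffix(u_i)$ and $\prefix(u_i)$ to weighted ancestor queries on the failure trees of $p$ and $p^\R$, and then invoke \Cref{thm:weightedancestors}.

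Let $\pi$ denote the KMP failure function of $p$ and let $T_p$ be the failure tree on nodes $\{0, 1, \dots, m\}$ with edges $\pi(i) \to i$ and weight $i$ at node $i$; the weights are strictly increasing along any root-to-node path, so $T_p$ fits the hypothesis of \Cref{thm:weightedancestors}. It is standard that the ancestors of node $b$ in $T_p$ are exactly the values $k$ for which $p[1..k]$ is a suffix of $p[1..b]$, i.e.\ $p[b-k+1..b] = p[1..k]$. Hence for $u_i = p[a_i..b_i]$, $\suffix(u_i)$ is the largest weight on the root-to-$b_i$ path that is at most $K_i := b_i - a_i + 1$, namely the deepest ancestor of $b_i$ with weight $\le K_i$. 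A symmetric reduction via reversal handles $\prefix$: since reversing swaps prefix/suffix relations, $\prefix(u_i)$ has the same length as $\suffix_{p^\R}(u_i^\R)$, where $u_i^\R = p^\R[m-b_i+1..m-a_i+1]$, giving an analogous query on $T_{p^\R}$ at node $m - a_i + 1$ with threshold $K_i$. Both failure trees (and a pointer from each node to a descendant leaf) can be built in $O(m)$ time.

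To match the form of \Cref{thm:weightedancestors} (``furthest ancestor of weight $\ge k$''), I convert each query as follows: if the starting node $v$ itself has weight $\le K_i$ (the edge case $a_i = 1$ for $\suffix$, or $b_i = m$ for $\prefix$), the answer is $v$ and the weight equals $|u_i|$; otherwise the deepest ancestor of weight $\le K_i$ is the parent of the furthest ancestor of weight $\ge K_i + 1$, so one weighted ancestor query plus a constant-time parent lookup suffices. Thus each of the $2q$ computed values uses at most one weighted ancestor query, with weight threshold in $[1..m]$.

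The last ingredient is to supply the $O(q)$ sorted weights required by \Cref{thm:weightedancestors} within $O(q + m/w)$ time; the naive counting-sort approach costs $O(q + m)$ and is therefore too slow. I plan to exploit word-level parallelism: allocate an $m$-bit bitmap $M$ of $m/w$ words and set the bit corresponding to each query weight in $O(q + m/w)$ time, preprocess $M$ for constant-time $\rank$ queries in $O(m/w)$ time, map each query weight $k$ to its rank $\rank(M,k) \in [1..r]$ with $r \le 2q$, and finally counting-sort on these compressed keys in $O(q)$ time. Combined with the $O(q + m/w)$ bound from \Cref{thm:weightedancestors}, this yields the claimed overall running time. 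The combinatorial reduction is classical; I expect the only genuine subtlety to be this packed sort, which forces word-level tricks rather than a straightforward $O(q + m)$ bucket sort.
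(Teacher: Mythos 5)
Your reduction to weighted-ancestor queries on the KMP failure trees of $p$ and $p^\R$ is correct, including the threshold conversion (the deepest ancestor of weight $\le K$ is the parent of the furthest ancestor of weight $\ge K+1$) and the edge cases $a_i=1$ and $b_i=m$. This is a genuinely different route from the paper's proof. The paper instead locates each $u_i$ as a (possibly implicit) node in the suffix tree of $p$ using a single weighted-ancestor query on string depth, and then reads $\prefix(u_i)$ off a precomputed $O(m)$-time array that stores, for each explicit suffix-tree node, its nearest ancestor with a $\$$-labeled child; $\suffix$ is obtained symmetrically from the suffix tree of $p^\R$. Both proofs rest on \Cref{thm:weightedancestors}; yours runs the weighted-ancestor machinery on a lighter combinatorial object (failure trees, which encode exactly the border structure needed), whereas the paper uses the same suffix trees that the rest of the paper already relies on and so gets the located suffix-tree node of $u_i$ as a reusable by-product. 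As proofs of this lemma the two are interchangeable.

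On the sorting sub-issue: you are right that the naive $O(q+m)$ bucket sort is too slow, but the paper already supplies the needed primitive in \Cref{lem:sorting}, which sorts $q$ integers from $[m]$ in $O(q+m/w)$ time (two-pass radix sort when $w\le\sqrt{q+m}$, and the predecessor structure of \Cref{lem:predecessor} plus counting sort when $w>\sqrt{q+m}$). Your proposed alternative, a $q$-weight bitmap with constant-time rank built in $O(m/w)$ time, has a gap as written: obtaining constant-time rank over an $m$-bit vector with only $O(m/w)$ build time presupposes a constant-time popcount primitive, and the usual workaround via a within-word lookup table needs size about $2^{w/c}$, which $O(m)$ preprocessing does not cover when $w\gg\log m$. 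Neither is a stated operation in the paper's word-RAM model (standard arithmetic without division plus bitwise operations). This is easily fixed by invoking \Cref{lem:sorting} in place of the bitmap, so it does not undermine your failure-tree reduction, but the sorting sketch does not stand on its own.
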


\begin{proof}
	Every substring $u_i$ corresponds to a (possibly implicit) node in the suffix tree.
	These nodes can be computed in time $O(q + m/w)$ using the weighted ancestor data structure.
	For every explicit node we precompute its nearest ancestor with a $\$$-labeled child in time $O(m)$,
	which corresponds to the $\prefix$-information.
	Similarly, the $\suffix$-information can be computed from the suffix tree of the reversed pattern.
\end{proof}

A {\em straight-line program (SLP)} is a context-free grammar $\G$ such that
(i) every nonterminal occurs exactly once on the left-hand side of a rule
and (ii) there exists a linear order $<$ on the nonterminals such that $A < B$ whenever $B$ occurs
on the right-hand side of a rule $A \to u$.
This ensures that every nonterminal $A$ derives a unique terminal string $\val(A)$,
and we set $\val(\G) = \val(S)$ where $S$ is the start nonterminal.
The {\em size} $|\G|$ of $\G$ is the total length of the right-hand sides of all rules.
We can always assume that all nonterminals and rules are reachable from the start variable
and that $\G$ is in {\em Chomsky normal form},
i.e. all rules are of the form $A \to BC$ or $A \to a$ where $A,B,C$ are nonterminals and $a$ is a terminal symbol.
Furthermore, by \cite{GanardiJL19} we can transform $\G$ into an SLP of size $O(n)$
whose derivation tree has height $O(\log N)$
in $O(n)$ time where $N = |\val(\G)|$.
We refer to \cite{Lohrey12} for a good overview on grammar-based compression.

\section{Reduction to substring concatenation}
\label{sec:main}

Consider an SLP $\G$ of size $n$ for a text of length $N$
and a pattern $p$ of length $m \ge 2$.
We define $\prefix(A) = \prefix(\val(A))$ and $\suffix(A) = \suffix(\val(A))$ for nonterminals $A$.
Observe that the pattern $p$ occurs in $\val(\G)$ if and only if there exists a rule $A \to BC$
such that $p$ occurs in $\suffix(B) \, \prefix(C)$.
Instead of computing $\prefix(A)$ and $\suffix(A)$
we will compute the following approximation for every nonterminal $A$ in $\G$:
\begin{enumerate}
\item If $\val(A)$ occurs in $p$ we compute the {\em substring information} for $A$,
i.e. a substring $s_A$ of $p$ with $\val(A) = s_A$.
\item If $\val(A)$ does not occur in $p$ we compute the {\em prefix} and the {\em suffix information} for $A$,
i.e. two substrings $x_A$ and $y_A$ such that $\prefix(A)$ is a prefix of $x_A y_A$ which in turn is a prefix of $\val(A)$, and
two substrings $u_A$ and $v_A$ such that $\suffix(A)$ is a suffix of $u_A v_A$ which in turn is a suffix of $\val(A)$.
\end{enumerate}

\begin{proposition}
	\label{prop:comp-info}
	One can compute the information above in time $O(n + m)$.
\end{proposition}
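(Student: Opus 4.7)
My plan is to compute the information in a single bottom-up pass over $\G$. I process the rules in topological order and, for each rule $A\to BC$, derive the information for $A$ from that of $B$ and $C$ by a constant-time case analysis. The only non-trivial work is a constant number of auxiliary \emph{queries} per rule: substring concatenation queries, answered in bulk by \Cref{thm:substringconcat}, and $\prefix/\suffix$ queries, answered in bulk by \Cref{lem:pref-suf}. With $O(n)$ queries in total, the batches cost $O(n+m/w)=O(n+m)$.

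Terminal rules $A\to a$ are immediate: either $a$ occurs in $p$ and $s_A$ records an occurrence, or it does not. For a rule $A\to BC$ with both $s_B$ and $s_C$ defined, I issue one substring concatenation query for $(s_B, s_C)$. If $s_B s_C$ occurs in $p$ then $s_A$ is set to the resulting position; otherwise I take $x_A = u_A = s_B$ and $y_A = v_A = s_C$, which trivially satisfies the invariant because $x_A y_A = u_A v_A = \val(A)$.

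When only one child, say $B$, has substring information, $\val(A)$ cannot occur in $p$, so only the prefix and suffix approximations must be produced. The key quantitative observation is that $|\prefix(A)|,|\suffix(A)|\le m$, so each approximation needs to cover only the first or last $\min(m,|\val(A)|)$ characters of $\val(A)=s_B\cdot\val(C)$. If $|\val(B)|\ge m$ then $x_A = s_B$ already covers $\prefix(A)$; otherwise $s_B$ has to be extended into $\val(C)$ using the prefix information of $C$. The suffix information of $A$ lives entirely inside $\val(C)$ when $|\val(C)|\ge m$ and is then inherited from that of $C$; otherwise a symmetric boundary extension is performed. The case where neither child has substring information is handled analogously: the prefix is inherited from $B$, the suffix from $C$, with at most one boundary extension when the corresponding value is shorter than $m$.

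The main obstacle is to argue that, in every boundary-extension case, two substrings of $p$ already suffice to represent the required prefix (or suffix) of $\val(A)$. Concretely, for the prefix of $A$ when $|\val(B)|<m$, the prefix $z$ of $\val(C)$ that lies inside $\prefix(A)$ must satisfy that $s_B\, z$ is a suffix of $p$, which pins $z$ down to a unique position in $p$. A single auxiliary substring concatenation query for $(s_B, x_C)$ either collapses $s_B$ and $x_C$ into one substring of $p$ (yielding $x_A$, with $y_A = y_C$), or its failure, combined with Fine and Wilf's periodicity lemma and the precomputed periods of prefixes and suffixes of $p$ from \Cref{lem:prec-periods}, forces $x_C$ and $y_C$ into a structured form that still admits a two-piece representation of the relevant prefix of $\val(A)$. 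The mirror argument handles $u_A, v_A$. Since all of these queries are batched, the bound $O(n+m)$ follows from \Cref{thm:substringconcat} and \Cref{lem:pref-suf}.
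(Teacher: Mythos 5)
The bottom-up, constant-work-per-rule skeleton is the same as the paper's, but your proposal has a genuine gap in the batching argument, and a second, smaller issue in the failure case.

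\textbf{The batching does not go through as stated.} You claim that the $O(n)$ queries "are answered in bulk" via \Cref{thm:substringconcat}, but the query issued for a rule $A\to BC$ (e.g.\ the concatenation query on $s_B$ and $x_C$) \emph{depends} on the answers already computed for $B$ and $C$. So you cannot issue all $O(n)$ queries as a single batch. What you can do is one batch per ``height level'' of the derivation tree, but the derivation tree of an arbitrary SLP can have depth $\Theta(n)$, giving up to $\Theta(n)$ batches. Each batch of size $q_k$ costs $O(q_k + m/w)$, so the total is $O(n + (\text{number of batches})\cdot m/w)$, which for $\Theta(n)$ batches is $O(n + nm/w)$ and is not $O(n+m)$ in general. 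The paper's proof closes exactly this gap by first applying the result of Ganardi et al.\ to rebuild $\G$ with derivation tree of height $O(\log N)$, partitioning the nonterminals into $O(\log N)$ layers $L_k$ by height, and then processing one batch per layer, for a total of $\sum_k O(|L_k| + m/w) = O(n + (\log N)\cdot m/w) = O(n + m)$ using $w \ge \log N$. Without this restructuring step your running-time claim does not hold.

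\textbf{The failure case is overcomplicated.} You propose to invoke Fine and Wilf's periodicity lemma when the concatenation query on $(s_B,x_C)$ fails, but no periodicity argument is needed. The decisive observation is much simpler: $\prefix(A)$ is a suffix of $p$, hence a substring of $p$; if $\prefix(A)$ extended strictly beyond $s_B x_C$ inside $s_B x_C y_C$, then $s_B x_C$ would be a substring of $\prefix(A)$ and therefore of $p$, contradicting the failed query. Hence $\prefix(A)$ is a prefix of $s_B x_C$, and you can simply set $x_A := s_B$, $y_A := x_C$ (and inherit $\suffix(A)=\suffix(C)$). The periodicity machinery you sketch is both unnecessary and not clearly sound as written; the constant-time invariant is maintained by the direct observation above. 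Relatedly, \Cref{lem:pref-suf} is not needed in this proof at all; it is used later, in the reduction preceding \Cref{prop:triples}.
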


\begin{proof}
By \cite{GanardiJL19} we can restructure $\G$ so that $\G$ has size $O(n)$ and the derivation tree of $\G$ has height $O(\log N)$.
Furthermore, all nonterminals of the original SLP are present in the new SLP, deriving the same strings.
Let $L_k$ be the set of nonterminals $A$ in $\G$ whose derivation tree has height $k$.
The goal is to compute the information for all nonterminals in $L_k$, assuming the information
has been computed already for $L_0, \dots, L_{k-1}$ in time $O(|L_k| + m/w)$.
In total, this sums up to
$\sum_{k=1}^{O(\log N)} O\left(|L_k| + m/w\right) = O(n + m)$
since $w \ge \log N$.

Consider a rule $A \to BC$ where $A \in L_k$ and $B,C \in \bigcup_{i < k} L_i$.
If $\val(B) = s_B$ and $\val(C) = s_C$ we search for a concatenatation $s_B s_C$ in $p$.
If this is successful, we have the substring information for $A$.
Otherwise, $s_B s_C$ is both the prefix information and the suffix information for $A$
since it covers both $\prefix(A)$ and $\suffix(A)$.
If either $\val(B)$ or $\val(C)$ does not occur in $p$ then also their concatenation does not occur in $p$,
and we need to compute the prefix and suffix information for $A$.

Suppose that $\val(B) = s_B$ and $\prefix(C)$ is a prefix of $x_C y_C$ which in turn is a prefix of $\val(C)$.
Then $\prefix(A)$ is a prefix of $s_B x_C y_C$.
We search for an occurrence of $s_B x_C$ in $p$:
If we are successful we replace $s_B x_C$ by that substring.
Otherwise $\prefix(A)$ is a prefix of $s_B x_C$.
Furthermore, $\suffix(A) = \suffix(C)$ since otherwise $\val(C)$ would occur in $p$.
Similarly, we treat the case where $\val(B)$ does occur in $p$ but $\val(C)$ does not.

If both $\val(B)$ and $\val(C)$ do not occur in $p$ then neither does $A$.
Furthermore $\prefix(A) = \prefix(B)$ and $\prefix(A) = \suffix(C)$.

Notice that for every of the $O(\log N)$ layers $L_k$ we only need to solve a batch of $|L_k|$ queries
of the substring concatenation problem, taking $O(|L_k| + m/w)$ time using \Cref{thm:substringconcat}.
\end{proof}

\begin{figure}

\centering
\begin{tikzpicture}

\tikzset{every node/.style={rectangle, draw, inner sep = 0, minimum height = 1em}}
\tikzstyle{string}=[minimum width = 5em]
\tikzstyle{pattern}=[minimum width = 12em]

\node [string] at (0em,0em) {$u$};
\node [string] at (5em,0em) {$v$};
\node [string] at (10em,0em) {$x$};
\node [string] at (15em,0em) {$y$};

\node [pattern] at (5em,-1.5em) {$p$};
\node [pattern] at (7.5em,-3em) {$p$};
\node [pattern] at (10em,-4.5em) {$p$};

\end{tikzpicture}

\caption{Pattern matching in a concatenation of four substrings can be reduced to three substrings.}
\label{fig:four-to-three}

\end{figure}
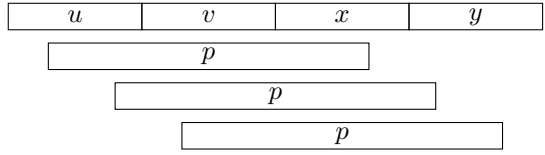

Hence for every nonterminal $A$ we can compute four (possibly empty) substrings $x_A, y_A, u_A, v_A$ of $p$,
so that $x_A y_A$ lies between $\prefix(A)$ and $\val(A)$ in the prefix ordering
and $u_A v_A$ lies between $\suffix(A)$ and $\val(A)$ in the suffix ordering.
Observe that $p$ occurs in $\val(\G)$ if and only if $p$ occurs in $u_B v_B x_C y_C$ for some rule $A \to BC$.
Hence we have reduced pattern matching to the following problem:
Given a set $Q$ of $O(n)$ many quadruples $(u,v,x,y)$ of substrings of $p$,
does $p$ occur in $uvxy$ for some tuple $(u,v,x,y) \in Q$?
We can reduce the number of substrings from four to three in time $O(n+m)$, see \Cref{fig:four-to-three}:
If $p$ occurs in $uvxy$ then it occurs in either $uvx$ or $vxy$,
or $vx$ occurs in $p$.
For all tuples $(u,v,x,y) \in Q$ we test whether $vx$ occurs in $p$ using a substring concatenation query.
If so, we replace $vx$ by a single substring (using \Cref{thm:substringconcat}),
and otherwise we replace the quadruple $(u,v,x,y)$
by the triples $(u,v,x)$ and $(v,x,y)$.
This yields a set $\hat Q$ of substring triples of size $O(n)$.
It remains to search for occurrences of $p$ in $uvx$ for some $(u,v,x) \in \hat Q$.
The following proposition generalizes \cite[Lemma~3.1]{Gawrychowski11a} and
\cite[Lemma~6]{Gawrychowski11}, respectively.

\begin{proposition}
	\label{prop:triples}
	Given a finite set $Q$ of substring triples $(u,v,x)$ of the pattern $p$,
	we can test in time $O(|Q|+m)$ whether $p$ occurs in $uvx$ for some $(u,v,x) \in Q$.
\end{proposition}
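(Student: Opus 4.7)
The plan is to classify occurrences of $p$ in $uvx$ by their position relative to the two boundaries and reduce the problem to substring concatenation queries plus a combinatorial argument using the periodicity lemma.

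\emph{Case split.} Any occurrence of $p$ in $uvx$ either (i) lies entirely within one of $u$, $v$, or $x$; (ii) crosses exactly one of the two boundaries, so it lies in $uv$ or in $vx$; or (iii) crosses both boundaries, and hence contains the entire factor $v$. Case (i) reduces to testing $p \in \{u,v,x\}$, which is $O(1)$ per triple using the suffix tree preprocessing from \Cref{sec:preliminaries}. Case (ii) is handled by two substring concatenation queries per triple via \Cref{thm:substringconcat}, costing $O(|Q|+m/w) = O(|Q|+m)$ in total. The remaining work is for case (iii).

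\emph{Reformulating case (iii).} Assume $|v| < m$ and that the sought occurrence of $p$ contains all of $v$. Let $\alpha = |\suffix(u)|$ and $\gamma = |\prefix(x)|$, obtained for all triples in time $O(|Q|+m/w)$ by \Cref{lem:pref-suf}. Such an occurrence corresponds to a position $t$ of $v$ in $p$ (i.e., $v = p[t..t+|v|-1]$) for which $p[1..t-1]$ is a suffix of $u$ and $p[t+|v|..m]$ is a prefix of $x$. Writing $a = t-1$ and $b = m-|v|-a$, this becomes: $p[a+1..a+|v|] = v$; $a \le \alpha$ and $p[1..a]$ is a border of $p[1..\alpha]$ (equivalently $\alpha - a$ is either $0$ or a period of $p[1..\alpha]$); and $b \le \gamma$ with an analogous border condition relative to $p[m-\gamma+1..m]$.

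\emph{Handling case (iii).} By \Cref{lem:prec-periods} we precompute in $O(m)$ the periods of all prefixes and suffixes of $p$. Let $\pi = \per(p[1..\alpha])$ and $\rho = \per(p[m-\gamma+1..m])$. By the periodicity lemma, the set of admissible $a$'s coming from the two border conditions, restricted to the window $[\max(0,m-|v|-\gamma)..\min(\alpha,m-|v|)]$, is the union of $O(1)$ arithmetic progressions with common differences among $\{\pi,\rho\}$, plus the two endpoints $\alpha$ and $m-|v|-\gamma$. Similarly, the set of positions $a$ for which $v$ occurs at $a+1$ in $p$, intersected with the same window, is either of size $O(1)$ (in which case we enumerate its elements) or, by Fine--Wilf, forms an arithmetic progression with common difference $\per(v)$. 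Testing non-emptiness of an intersection of $O(1)$ arithmetic progressions amounts to $O(1)$ linear congruences, solvable in $O(1)$.

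\emph{Main obstacle.} The delicate combinatorial step is the alignment of these three progressions when all of them are simultaneously large: their common differences must then be compatible because they all reflect a single underlying period of $p$ on the overlap region $p[m-|v|-\gamma+1..\alpha+|v|]$. This is the combinatorial heart of the argument, generalising \cite[Lemma~7]{Gawrychowski11} from two substrings to three; once it is in place, the intersection test becomes $O(1)$. Summing $O(1)$ per triple, $O(|Q|+m)$ for the substring concatenation queries, and $O(m)$ for the preprocessing yields the claimed $O(|Q|+m)$ bound.
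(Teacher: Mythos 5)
Your case (ii) is based on a misreading of the substring concatenation query. \Cref{thm:substringconcat} decides whether the concatenation $uv$ \emph{occurs inside} $p$; it says nothing about whether $p$ occurs inside $uv$. Deciding whether $p$ occurs in the concatenation of two substrings of $p$ is precisely the content of \cite[Lemma~3.1]{Gawrychowski11a} and \cite[Lemma~6]{Gawrychowski11}, which this proposition generalises, and it already requires the full periodicity machinery. So case (ii) is not ``handled by two substring concatenation queries''; it is a degenerate instance of the hard case and must be argued by the same kind of period analysis as case (iii). This is not a cosmetic slip but a missing half of the proof.

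Case (iii) is also not complete. You observe that the positions $a$ at which $v$ occurs inside the relevant window ``is either of size $O(1)$ or forms an arithmetic progression with common difference $\per(v)$''. That dichotomy holds only if the window of candidate positions has length at most $|v|$ (so that any two occurrences of $v$ in it overlap); in general the window has length up to $\alpha+\gamma+|v|-m+1$, and without control on $\alpha=|\suffix(u)|$ and $\gamma=|\prefix(x)|$ this need not be $\le |v|$. The paper gets this control by first replacing $u,x$ with $\suffix(u),\prefix(x)$ and then shrinking $u$ and $x$ down to length $\le m/4$ (repeatedly eliminating candidate occurrences starting in the first half of $u$ via the period of the prefix of $p$, halving $|u|$ each round); only then does $|v|\ge m/2$ force the periodic structure you want. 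The ``alignment of the three progressions'' step you defer as the ``combinatorial heart'' is indeed the crux: the paper resolves it by extending $v$ to the maximal $d$-periodic substring $s=p[|r|+1..m-|t|]$ of $p$ and showing that any occurrence of $p$ in $uvx$ forces $s$ to occur at an extremal (leftmost or rightmost) position inside the $d$-periodic part of $uvx$, which then gives a single candidate to verify with a constant number of lcp queries. Without that argument (or an equivalent one) your proof is a plan, not a proof.

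In short: a different decomposition (by which boundaries the occurrence crosses, rather than by first normalising and shrinking $u,x$) is a reasonable idea, but as written one branch uses the wrong primitive and the other branch leaves the key lemma unproved and relies on a window bound you have not established.
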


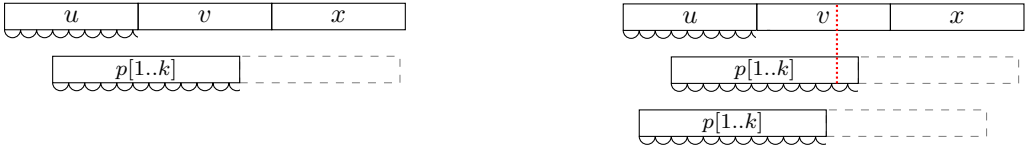
\begin{figure}

\def \combinatorics%
{
\tikzset{every node/.style={rectangle, draw, inner sep = 0, minimum height = 1em}}
\tikzstyle{string}=[minimum width = 5em]
\tikzstyle{period}=[snake=bumps,segment length=1.2em, segment amplitude = -0.3em]
\tikzstyle{erase}=[fill = white, draw = none]

\node [string] at (0em,0em) {$u$};
\draw[period] (-2.5em,-0.5em) -- (3em,-0.5em);
\draw[erase] (2.5em,-0.5em) rectangle (3em,-1em);

\node [string] at (5em,0em) {$v$};
\node [string] at (10em,0em) {$x$};

\draw[period] (-0.7em,-2.5em) -- (7em,-2.5em);
\draw[erase] (6.3em,-1.5em) rectangle (10em,-3em);
\node[minimum width = 7em] at (2.8em,-2em) {\footnotesize $p[1..k]$};
\node[minimum width = 6em, dashed, draw opacity = 0.5] at (9.3em,-2em) {};
}

\centering
\begin{multicols}{2}
\begin{tikzpicture}
\combinatorics
\end{tikzpicture}
\begin{tikzpicture}
\combinatorics
\draw[densely dotted, thick, red] (5.5em,-2.5em) -- (5.5em,0.5em);
\draw[period] (-1.9em,-4.5em) -- (5.8em,-4.5em);
\draw[erase] (5.1em,-4em) rectangle (8.8em,-5em);
\node[minimum width = 7em] at (1.6em,-4em) {\footnotesize $p[1..k]$};
\node[minimum width = 6em, dashed, draw opacity = 0.5] at (8.1em,-4em) {};
\end{tikzpicture}
\end{multicols}
\caption{Searching for an occurrence of $p$ in $uvx$ with a large overlap with $u$.}
\label{fig:overlap-ux}

\end{figure}

\begin{proof}

We will show how to test an occurrence of $p$ in $uvx$ for any triple $(u,v,x) \in Q$
using a constant number of weighted ancestor queries, substring concatenations, applications of \Cref{lem:pref-suf}
and other operations.
Since we can batch the queries into sets of $O(|Q|)$ queries
we can process the entire set $Q$ in time $O(|Q| + m)$.

Suppose we want to test whether $p$ occurs in the concatenation of given substrings $u,v,x$ of $p$.
Observe that $p$ occurs in $uvx$ if and only if it occurs in $\suffix(u) \, v \, \prefix(x)$.
By replacing $u$ by $\suffix(u)$ and $x$ by $\prefix(x)$
we can assume that $u$ is a prefix of $p$ and $x$ is a suffix of $p$.

\paragraph{Case 1: Large overlaps with $u$ or $x$.}
We can precompute the period $d$ of $u$ by \Cref{lem:prec-periods}.
We search for occurrences of $p$ in $uvx$ at positions $i \le |u|/2$.
Since $u$ is a prefix of $p$ such a position $i$ must be a period of $u$ (or $i = 0$)
and therefore it is of the form $i = \alpha d$ where $0 \le \alpha \le |u|/(2d)$.
Compute the maximal $k \ge |u|$ such that $p[1..k]$ has period $d$,
by computing the lcp between $p$ and $p[d+1..m]$.
Let $\alpha_{\max} \le |u|/(2d)$ be maximal such that $\alpha_{\max} d + m \le |uvx|$
(if there is no such $\alpha_{\max}$ then $p$ is longer than $uvx$ and does not occur).
Using at most two lcp queries we test whether $p[1..k]$ occurs in $uvx$ at position $\alpha_{\max} d$
and, if not, compute the leftmost mismatch.
In the following we compute an occurrence of $p$ in $uvx$ or eliminate all but one candidate position $\alpha d$.
In the latter case we test whether $p$ occurs at $\alpha d$ using at most two lcp queries.
\begin{itemize}
\item Assume that $p[1..k]$ occurs at position $\alpha_{\max} d$.
If $k = m$ then we have found an occurrence of $p$ in $uvx$.
If $k < m$ we claim that $p$ cannot occur at position $\alpha d$ where $\alpha < \alpha_{\max}$:
Since $p[1..k]$ occurs at position $\alpha_{\max} d$ we know that the prefix of $uvx$ of length $\alpha_{\max}d+k$
is $d$-periodic.
An occurrence at position $\alpha d$ would imply
\[
	p[k+1] = (uvx)[\alpha d + k+1] = (uvx)[\alpha_{\max} d + k+1 - d] = p[k+1-d],
\]
contradicting the fact that $p[1..k]$ is the maximal $d$-periodic prefix of $p$.
\item Assume that there is a mismatch, say $i \in [1..k]$ is minimal with
$p[i] \neq (uvx)[\alpha_{\max} d + i]$.
Observe that the prefix of $uvx$ of length $\alpha_{\max} d + i - 1$ is $d$-periodic.
We claim that an occurrence of $p[1..k]$ cannot cover the mismatch,
i.e. $p[1..k]$ cannot occur at positions $\alpha d$ with $\alpha d + k \ge \alpha_{\max} d + i$:
Otherwise $i + (\alpha_{\max} - \alpha)d \le k$ and thus
\[
	p[i] = p[i + (\alpha_{\max} - \alpha)d] = (uvx)[i + \alpha_{\max} d].
\]
Here the first equality uses that $p[1..k]$ is $d$-periodic and the second equality uses that
$p[1..k]$ occurs at position $\alpha d$.
This contradicts the assumption that $p[i] \neq (uvx)[\alpha_{\max} d + i]$.

Hence $p$ can only occur at positions $\alpha d < \alpha_{\max} d + i - k$.
If $k = m$ then $p$ occurs at any such position $\alpha d$ by $d$-periodicity of
the prefix of $uvx$ of length $\alpha_{\max} d + i - 1$.
If $k < m$ we claim that $p$ can only occur at the maximal position $\alpha d$
where $\alpha d + k < \alpha_{\max} d + i$.
Towards a contradiction, suppose that $p$ occurs at position $\alpha d$
where $(\alpha+1) d + k < \alpha_{\max} d + i$.
Then the prefix of $uvx$ of length $(\alpha+1) d + k$ is $d$-periodic, and thus
\[
	p[k + 1 - d] = (uvx)[\alpha d + k + 1 - d] = (uvx)[\alpha d + k + 1] = p[k+1],
\]
which contradicts the fact that $p[1..k]$ is the maximal $d$-periodic prefix.
\end{itemize}
If we have not found any occurrence we can replace $u$ by its suffix of length $\lfloor |u|/2 \rfloor$
and repeat the same procedure from above.
After at most three iterations we can ensure that $|u| \le m/4$.
By applying a symmetric argument to $x$ we can ensure that $|x| \le m/4$.
If $|v| < m/2$ then $|uvx| < m$ and $p$ does not occur in $uvx$.

\begin{figure}
\centering
\begin{tikzpicture}
\tikzset{every node/.style={rectangle, draw, inner sep = 0, minimum height = 1em}}
\tikzstyle{period}=[snake=bumps,segment length=1.2em, segment amplitude = -0.3em]
\tikzstyle{erase}=[fill = white, draw = none]

\draw[period] (-5.75em,-0.5em) -- (9em,-0.5em);
\draw[period] (-7em,-0.5em) -- (-5.75em,-0.5em);
\draw[erase] (7.5em,-0.5em) rectangle (10em,-2em);
\draw[erase] (-6.85em,-0.5em) rectangle (-9em,-2em);
\node [minimum width = 5em] at (-8.25em,0em) {};
\node [minimum width = 11.5em] at (0em,0em) {$v$};
\node [minimum width = 4em] at (7.75em,0em) {};
\node [minimum width = 1.75em] at (6.625em,0em) {\scriptsize $x^{}_1$};
\node [minimum width = 2.25em] at (8.63em,0em) {\scriptsize $x^{}_2$};

\node [minimum width = 14.35em] at (0.325em,-2em) {$s$};
\draw[period] (-5.75em,-2.5em) -- (9em,-2.5em);
\draw[erase] (7.5em,-2em) rectangle (10em,-3em);
\node [minimum width = 1.5em] at (8.25em,-2em) {$t$};
\draw[period] (-7em,-2.5em) -- (-5.75em,-2.5em);
\draw[erase] (-6.85em,-2em) rectangle (-9em,-3em);
\node [minimum width = 1.3em] at (-7.5em,-2em) {$r$};

\node [minimum width = 3.9em] at (-8.8em,0em) {\scriptsize $u^{}_1$};
\node [minimum width = 1.1em] at (-6.3em,0em) {\scriptsize $u^{}_2$};
\end{tikzpicture}
\caption{Searching for an occurrence of $p$ in $uvx$ where $v$ is long.}
\label{fig:long-v}

\end{figure}
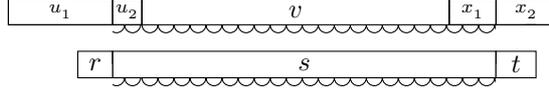

\paragraph{Case 2: $v$ is long.}
Now we assume that $|v| \ge m/2$ and $|u|,|x| \le m/4$.
We can again assume that $u$ and $x$ are a prefix and a suffix of $p$.
We can detect whether $p$ occurs in $uv$ (or in $vx$) with the same argument as in the case above
since such an occurrence must cover the suffix (prefix, respectively) of $v$ of length $3m/4 \ge |v|/2$.
If both of these tests are negative, any occurrence of $p$ in $uvx$ must cover the substring $v$.
We locate the node of the substring $v$ in the suffix tree.
The leaves below that node identify the occurrences of $v$ in $p$.
If there is only occurrence of $v$, say $v = p[i..j]$,
then we test whether it extends to an occurrence of $p$,
i.e. $p[1..i-1]$ is a suffix of $u$ and $p[j+1..m]$ is a prefix of $x$, using two lcp queries.

Now assume that there are at least two occurrences of $v$ in $p$.
If $p$ occurs in $uvx$ then every occurrence of $v$ in $p$ covers at least $|v| - \max\{|u|,|x|\} \ge |v| - m/4$ symbols of the explicit substring $v$ in $uvx$.
Since $v$ occurs at least twice in $p$ there must be such an occurrence of $v$ in $p$
that has a proper overlap with the explicit substring $v$ in $uvx$.
Hence, if $p$ occurs in $uvx$ then $\per(v) \le m/4 \le |v|/2$.
We can (pre)compute the difference $d$ of the positions of the first two occurrences of $v$.
If $d > |v|/2$ then $p$ does not occur in $uvx$, so we can assume that $d = \per(v) \le |v|/2$.
Consider the maximal substring $s$ of $p$ which is the periodic extension of an arbitrary occurrence of $v$ in $p$,
i.e. $\per(s) = \per(v)$, see~\Cref{fig:long-v}.
We can compute the factorization into substrings $p = rst$,
by starting with any occurrence of $v$ in $p$ and extending it to the left and to the right using two lcp queries.
Similarly, we compute how far the period of $v$ extends into $u$ and $x$, using two lcp queries.
We factor $u = u_1u_2$ and $x = x_1 x_2$ such that $u_2 v x_1$ is the maximal substring of $uvx$ with period $d$.

Whenever $p = rst$ occurs at position $i$ in $u_1 u_2 v x_1 x_2$ then $s$ occurs in $u_2 v x_1$ at position $i + |r| - |u_1|$.
We can compute all occurrences of $s$ in $u_2 v x_1$ as follows:
Since $v$ is a substring of $s$ we can compute an occurrence $k$ of $v$ in $s$, i.e. $v = s[k+1..k+|v|]$.
Then the occurrences of $s$ in $u_2 v x_1$ are the positions $|u_2| - k + \alpha d$ in the interval $[0..|u_2 v x_1|-s]$
where $\alpha \in \mathbb{Z}$.
If there is no such occurrence then $p = rst$ also does not occur in $u_1 u_2 v x_1 x_2 = uvx$.
Now assume that $s$ does occur in $u_2 v x_1$.
\begin{itemize}
\item If $r$ and $t$ are empty then $p = s$ occurs in $uvx$.
\item Suppose that $r$ is nonempty. We claim that if $p$ occurs at position $i$ in $uvx$
then $j = i + |r| - |u_1|$ must be the leftmost occurrence of $s$ in $u_2 v x_1$.
Towards a contradiction, suppose that $s$ also occurs at position $j - d$ in $u_2 v x_1$.
Then we have
\[
	p[|r|] = (uvx)[i + |r|] = (uvx)[j + |u_1|] = (u_2 v x_2)[j] = (u_2 v x_2)[j-d+d] = s[d] = p[|r|+d],
\]
which contradicts the maximality of $s$ in $p$.
Hence we can compute the leftmost occurrence $j$ of $s$ in $u_2 v x_1$,
yielding a candidate position $i = j + |r| - |u_1|$ for $p$ at $uvx$,
which can be verified using a constant number of lcp queries.
\item If $t$ is nonempty we proceed symmetrically using the rightmost occurrence of $s$ in $u_2 v x_1$.
\end{itemize}
This concludes the proof.
\end{proof}

\section{Weighted ancestor queries}

\label{sec:weightedancestors}

In this section, we will work with arbitrary node-weighted trees. We assume that the weights are strictly increasing
on each root-to-leaf path, and denote by $\weight(u)$ the weight of a node $u\in T$. A weighted ancestor query asks,
given a node $u\in T$ and a weight $k \leq \weight(u)$, to find the furthest ancestor $v$ of $u$ such that $\weight(v)\geq k$.
The answer to such a query does not change if we replace $u$ by any leaf in its subtree, thus by storing for each $u$
a pointer to any leaf in its subtree it is enough to show how to answer a weighted ancestor query for $u$ being a leaf.

We will extensively use the following decomposition of an tree $T$ on $n$ leaves and a parameter $x$
(similar to the ART-decomposition of Alstrup et al.~\cite{ART}), or $x$-decomposition for short. First, we order the children of every node of $T$
(any ordering suffices) and order all nodes according to their numbers in the preorder traversal.
We choose
every $x^{\text{th}}$ node of $T$ in this order and define the top tree $T'$ to be the subtree of $T$ induced by the 
root, all chosen nodes, and the least common ancestor of every two chosen nodes.
The parent node of a non-root node $v$ in $T'$ is the nearest ancestor of $u$ in $T$ that also appears in $T'$
(the root of $T$ becomes the root of $T'$).
The only nodes with one child in $T'$ are possibly the root and some of the chosen nodes.
Thus, the total number of nodes in $T'$ is $O(n/x)$.
If we remove from $T$ all paths from the root to the chosen nodes,
we obtain a set of subtrees of $T$, called {\em bottom trees}.
Notice that a bottom tree is rooted at a child of a node
that belongs to a path from the root to some chosen node.
Further, the nodes of each bottom tree constitute a contiguous fragment in the preorder traversal,
and hence their size is less than $x$.

\subsection{Predecessor queries and sorting}

Let $S$ be a finite ordered set of elements, say a set of integers or of words ordered lexicographically.
For an element $x$ we define $\rank(S,x) = |\{y \in S \mid x < y\}|$.
For a number $i$ we define $\select(S,i) = x$ such that $\rank(S,x) = i$.

\begin{lemma}[\cite{PatrascuT14}]
\label{lem:predecessor}
Given a set $S$ of $s \le \poly(w)$ integers, each consisting of $w$ bits, we can construct
in $O(s)$ time and space a structure so that we can compute $\rank(S,x)$ and $\select(S,i)$ in constant time.
\end{lemma}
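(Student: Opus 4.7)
The plan is to build a fusion tree in the style of Fredman--Willard, using the refinements of Pătraşcu--Thorup. Fix the branching factor $B = \Theta(w^{1/5})$. Then the depth of a perfectly balanced $B$-ary tree on $s$ keys is $O(\log_B s) = O(\log s / \log w)$, which is $O(1)$ thanks to the assumption $s \le \poly(w)$. So it suffices to process each level of the tree in constant time.

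At each internal node I store its at most $B$ separator keys as a Fredman--Willard \emph{sketch}: a packed word encoding the $O(B^2)$ distinguishing bits of the keys, which fits in $O(B^4) \le w$ bits. Combined with word-level parallel comparison against a sketched query $x$, this supports constant-time search for where $x$ falls among the separators. To support $\rank$, I additionally store at each node the cumulative sizes of its subtrees; descending the tree then accumulates the count of keys larger than $x$. For $\select(S,i)$ I walk down from the root, at each node using the stored subtree sizes to pick in $O(1)$ time the child whose range contains the target index.

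For the $O(s)$ construction bound, I first sort the input set in $O(s)$ time. This is possible exactly because $s \le \poly(w)$: each key can be split into $O(1)$ blocks of $\Theta(\log s)$ bits, and packed radix sort using word-level parallelism processes each block in $O(s)$ time, so $O(1)$ rounds suffice. Once the set is sorted, the fusion tree is built bottom-up, and the sketches together with the cumulative subtree-size counters are assembled in constant time per node using standard word-RAM bit tricks.

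The main obstacle is precisely keeping the construction time linear rather than the naive $O(s\log s)$ that comparison-based sorting would force; this is why the hypothesis $s \le \poly(w)$ is indispensable, since it is what enables the packed sorting step to shave the logarithmic factor. The constant-time query side is comparatively routine once the sketches are in place.
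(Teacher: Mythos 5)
The paper gives no proof of this lemma; it is a citation to P\u{a}tra\c{s}cu--Thorup, so your proof is evaluated on its own merits. The overall skeleton you propose is sound: with branching factor $B=\Theta(w^{1/5})$ the tree on $s$ keys has depth $O(\log_B s)=O(\log s/\log w)$, which is $O(1)$ because $s\le\poly(w)$; fusion-node sketches then give constant-time search at each level, and subtree-size counters turn search into $\rank$ and $\select$. This is indeed the structure behind the cited result.

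However, your justification of the $O(s)$ construction time has a genuine arithmetic error. You claim a $w$-bit key ``can be split into $O(1)$ blocks of $\Theta(\log s)$ bits,'' but the number of such blocks is $\Theta(w/\log s)$, and $s\le\poly(w)$ gives $\log s=O(\log w)$, so this is $\Theta(w/\log w)$, which is very much not $O(1)$. Radix sort over that many rounds takes $\Theta(s\,w/\log w)$ time, not $O(s)$, so the step you call the ``main obstacle'' is exactly where the argument breaks. What actually makes $O(s)$ sorting available here is the opposite observation: $s\le\poly(w)$ forces $w\ge s^{\epsilon}\gg\log^{2+\delta}s$, which places us squarely in the regime of the Andersson--Hagerup--Nilsson--Raman ``Sorting in Linear Time?'' result, not packed radix sort. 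Alternatively --- and this is the reading the paper's citation invites --- you can avoid sorting entirely by inserting the keys one by one into the dynamic fusion node of P\u{a}tra\c{s}cu--Thorup, which supports $O(1)$ worst-case updates on a set of at most $w^{1/4}$ keys; with $O(1)$ tree depth this yields $O(s)$ construction directly.

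A secondary issue: you describe assembling the Fredman--Willard sketches as ``constant time per node using standard word-RAM bit tricks.'' This understates the difficulty considerably; achieving $O(1)$-time maintenance of a fusion node (computing the distinguishing bits, the approximate sketch via a single multiplication, and the parallel-comparison masks incrementally) is precisely the technical contribution of the P\u{a}tra\c{s}cu--Thorup paper being cited. The original Fredman--Willard construction is not $O(1)$ per node, so this step cannot be waved through as routine; either cite the dynamic fusion node result explicitly or supply the construction.
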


\begin{lemma}
\label{lem:sorting}
A set of $n$ integers from $[m]$ can be sorted in $O(n+m/w)$ time.
\end{lemma}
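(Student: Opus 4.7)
The plan is a bucket sort whose buckets are encoded implicitly by a bitmap of $m$ bits, so that we pay neither the $\Theta(m)$ cost of initialising bucket heads nor the $\Theta(m)$ cost of scanning empty buckets at the end. First I would allocate the bitmap $B$ (that is, $\lceil m/w \rceil$ words) and zero it in $O(m/w)$ time, together with an uninitialised counter array $C[1..m]$. Then, for each input value $x_i$, I would inspect $B[x_i]$: if it equals $0$, set $B[x_i] \gets 1$ and $C[x_i] \gets 1$; otherwise increment $C[x_i]$. This records the multiplicity of every distinct value in $O(n)$ total time.

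For the output pass I would traverse $B$ word by word. A zero word is dismissed in $O(1)$, while a nonzero word containing $k$ set bits is processed in $O(1+k)$ time by repeatedly extracting the lowest set bit (using the standard identity that $x$ and its two's-complement negation share exactly their lowest set bit), recovering its position in $O(1)$ by means of standard word-level bit tricks (a one-off de~Bruijn lookup table of $O(w)$ entries is the textbook choice), and clearing it. For each position $j$ produced this way I would emit the value $j$ exactly $C[j]$ times, preserving the overall sorted order because $B$ is scanned from low to high addresses and within a word the lowest bit is extracted first.

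Totalling the three phases---$O(m/w)$ for zeroing $B$, $O(n)$ for the input pass, and $O(m/w) + O(n)$ for the output pass (at most $n$ bits are ever set, and the $C[j]$'s sum to $n$)---gives the claimed $O(n + m/w)$ bound. The main subtlety I would need to argue cleanly is the lazy initialisation of $C$: an entry $C[j]$ is ever read only after $B[j]$ has been flagged, which itself can only happen in the same step that initialises $C[j]$, so the uninitialised memory never influences the output. Once that invariant is stated the rest is routine time accounting, and I expect no further obstacles.
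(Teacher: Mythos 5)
Your proposal is correct, but it takes a genuinely different route from the paper. The paper case-splits on $w$ versus $\sqrt{n+m}$: when $w\le\sqrt{n+m}$ it uses two-pass radix sort with base $\Theta(\sqrt{m})$, giving $O(n+\sqrt{m})=O(n+m/w)$; when $w>\sqrt{n+m}$ it observes that $n\le w^2$, builds the $O(n)$-time rank/select structure of \Cref{lem:predecessor} over the distinct inputs, and finishes with counting sort by rank in $O(n)$. Your solution is instead a single unified construction: a bucket sort whose occupied buckets are recorded in a packed $m$-bit bitmap, so that both the zeroing and the final scan cost only $O(m/w)$ word operations, and the counter array is never read at a position whose bit has not been set (the lazy-initialisation invariant you state). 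This is more self-contained, avoiding any appeal to \Cref{lem:predecessor}, at the cost of $\Theta(m)$ working space for the uninitialised counter array; the paper's route uses only $O(n+m/w)$ space, though the lemma makes no space claim, so this is a trade-off rather than a defect. Two details are worth pinning down in your write-up: (i) extracting the position of an isolated bit in $O(1)$ needs justification in a model without integer division---a de~Bruijn table costs $\Theta(w)$ to build, which is not obviously within the $O(n+m/w)$ budget for small inputs, so it is cleaner to reuse the Fredman--Willard most-significant-bit primitive already invoked in \Cref{lem:micro-wa} (the MSB of $x\mathbin{\&}(-x)$ is exactly the LSB position of $x$); and (ii) addressing $B[x_i]$ requires the word index $\lfloor x_i/w\rfloor$ and offset $x_i\bmod w$, which without division needs $w$ to be (rounded down to) a power of two so that a shift and a mask suffice---standard, but worth saying explicitly given the model's restriction.
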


\begin{proof}
For $w\leq\sqrt{n+m}$ we use radix sort to sort in $O(n+\sqrt{m})=O(n+\sqrt{n+m})=O(n+(n+m)/w)=O(n+m/w)$ time.
For $w>\sqrt{n+m}$ we proceed as in the proof of \Cref{lem:predecessor}, and observe that
the size of the maintained set is $n\leq w^{2}$. Then
we go over the input set and query the built structure to obtain the rank of each integer
in the set of distinct integers, and then sort by counting in $O(n)$ time.
\end{proof}

\begin{lemma}
\label{lem:batchedpred}
Given a sorted list $S$ of $n$ integers, each consisting of $w$ bits, we can construct
in $O(n)$ time and space a structure that, given a sorted list of $q$ integers $x_{1},\ldots,x_{q}$, each consisting of $w$ bits,
computes $\rank(S,x_j)$ for each $j \in \{1, \dots, q\}$ in $O(q+n/w)$ total time.
\end{lemma}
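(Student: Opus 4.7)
\medskip

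\noindent\textbf{Proof plan for \Cref{lem:batchedpred}.} The plan is to reduce the batched predecessor problem on $S$ to many small instances of \Cref{lem:predecessor} together with a linear-time merging step, exploiting the fact that both $S$ and the query list are sorted.

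First I would partition the sorted list $S$ into $k = \lceil n/w \rceil$ contiguous blocks $B_1, B_2, \dots, B_k$, each of size at most $w$, and write down an array $M[1..k]$ consisting of the maximum element of each block. Since $|B_i| \le w \le \poly(w)$, I can invoke \Cref{lem:predecessor} on each $B_i$ independently, obtaining a constant-time $\rank$/$\select$ structure for $B_i$ in $O(|B_i|)$ time and space; summing over all blocks gives $O(n)$ total preprocessing. I also store, for every block $B_i$, the cumulative rank offset $(i-1)w$ so that a within-block rank can be converted to a global rank in $O(1)$ time.

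To answer a batch of sorted queries $x_1 \le x_2 \le \dots \le x_q$, I would maintain a block pointer $i$ initialized to $1$ and process the queries left-to-right. For each $x_j$, I advance $i$ while $M[i] < x_j$; when the loop stops, the predecessor of $x_j$ in $S$ lies in $B_i$ (or $x_j$ is smaller than everything in $S$, which is the easy boundary case). I then query the per-block structure of $B_i$ in $O(1)$ time via \Cref{lem:predecessor} and add the precomputed offset to obtain $\rank(S, x_j)$. The key observation is that because both sequences are sorted, the pointer $i$ is monotonically non-decreasing across the entire batch, so the total number of advances is at most $k = O(n/w)$, while the per-query work after locating the block is $O(1)$. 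This yields the desired $O(q + n/w)$ query time.

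There is no real obstacle here, only a couple of bookkeeping points worth verifying: that \Cref{lem:predecessor} indeed applies to each block (which holds since $|B_i| \le w \le \poly(w)$), that the merge-style pointer scan is charged correctly (each block contributes $O(1)$ to the pointer advancement and each query contributes $O(1)$ to the per-query work), and that the initial partitioning and offset table cost only $O(n)$ time and space. Handling the last, possibly shorter, block and queries falling outside the range of $S$ is routine.
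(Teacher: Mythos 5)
Your proposal is correct and essentially identical to the paper's proof: the paper also partitions $S$ into blocks of size $w$, builds a \Cref{lem:predecessor} structure per block, and routes each query to its block by a linear merge of the sorted query list with the sorted list of block boundaries (the paper uses the first element of each block where you use the maximum, and describes the merge explicitly rather than as a monotone pointer scan, but these are the same computation). No gaps; this matches the intended argument.
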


\begin{proof}
We partition $S$ into $n/w$ blocks of size $w$. We separately store a sorted list containing the first
element from each block, and for each block we store its elements in a structure implemented with \Cref{lem:predecessor}.
To answer a query, we first merge in $O(q+n/w)$ time the sorted list of $x_{1},\ldots,x_{q}$ with the sorted list
containing the first element from each block. This gives us, for every $x_{j}$, a unique block where we should
search for its predecessor. We query the predecessor structure of the block to obtain the rank of the predecessor in the block,
which is then used to retrieve the rank of the predecessor in $S$.
\end{proof}

\subsection{Batched weighted ancestor queries}

To prove \Cref{thm:weightedancestors} we combine two solutions for the weighted ancestor problem:
By \cite[Lemma~7.2]{KociumakaKRRW20} we can answer $q$ weighted ancestor queries in $O(q+s)$ time,
assuming that the queries are sorted by their weights.
Furthermore, we use the following solution on small trees:

\begin{lemma}
\label{lem:micro-wa}
A tree of size $s \le O(w)$ can be preprocessed in $O(s)$ time so that we can answer weighted ancestor queries online in constant time.
\end{lemma}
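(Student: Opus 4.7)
The plan is to exploit the fact that, since $s \le O(w)$, any subset of the tree's nodes can be represented as an $s$-bit bitmask fitting in $O(1)$ machine words, which reduces weighted ancestor queries to constant-time bit manipulations. Concretely, I would first relabel the $s$ nodes by $1, 2, \ldots, s$ in order of increasing weight (breaking ties arbitrarily, which is safe because two nodes with equal weight cannot share a root-to-leaf path by the strict monotonicity assumption). For each node $v$ I would precompute an \emph{ancestor bitmask} $A_v \in \{0,1\}^s$, where bit $i$ is set iff the node with label $i$ is an ancestor of $v$ (with $v$ regarded as its own ancestor). These bitmasks are built top-down by $A_v = A_{P(v)} \vee 2^{\mathrm{label}(v)-1}$ in $O(s)$ total time.

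For a query $(u,k)$, I would observe that since weights strictly increase along any root-to-$u$ path, the ancestors of $u$ with weight $\ge k$ form a subpath whose topmost node is exactly the ancestor with the \emph{smallest} weight that is at least $k$, i.e.\ the one with the smallest label under my ordering. I would therefore also build the rank/select structure of \Cref{lem:predecessor} on the multiset of weights; in constant time it returns the number $r$ of weights at least $k$, whence the bitmask of nodes whose label lies in the top $r$ is $2^s - 2^{s-r}$. Intersecting this with $A_u$ and extracting the lowest set bit via standard constant-time bit tricks yields the label of the answer, which a precomputed label-to-node array converts back to the tree node. For a non-leaf $u$ the same bitmask $A_u$ works, matching the reduction already described above the lemma.

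The main subtlety is producing the weight ordering itself within $O(s)$ time: invoking \Cref{lem:sorting} would incur an extra $O(m/w)$ term that exceeds the stated preprocessing budget when the weight universe $m$ is much larger than $w$. I would side-step this by using \Cref{lem:predecessor} directly, which handles up to $\poly(w)$ integers in $O(s)$ preprocessing, and then enumerate the weights in increasing order via $s$ successive $\select$ calls, each costing $O(1)$. This produces both the labelling used to define the ancestor bitmasks and the rank/select structure used at query time, all within a single $O(s)$ preprocessing pass.
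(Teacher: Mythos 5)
Your proof is correct and essentially the same as the paper's: both attach to each node an $O(w)$-bit ancestor bitmask, build the masks top-down from the parent in $O(s)$ total time, use the rank/select structure of \Cref{lem:predecessor} to turn the threshold $k$ into a bit position, and finish with a constant-time bit-extraction (you take the lowest set bit in a suffix of labels ordered by increasing weight; the paper equivalently takes the highest set bit in a prefix of ranks ordered by decreasing weight). The only substantive difference is in handling equal weights: you break ties arbitrarily and speak of a ``multiset'' rank/select structure, whereas the paper perturbs each weight to $\weight(v_i)\cdot 2^w + i$ along a DFS order so that \Cref{lem:predecessor} applies verbatim to a genuine set; your tie-breaking observation is sound, but to be fully rigorous you would still need to say how the set-based \Cref{lem:predecessor} is adapted to report the number of nodes (with multiplicity) of weight at least $k$, e.g.\ by building it on the distinct weights and storing cumulative counts.
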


\begin{proof}
First we transform $T$ into a tree $\tilde T$ with pairwise distinct weighted depths:
Let $v_1, \dots, v_s$ be a depth-first traversal of $T'$.
We replace the weight $\weight(v_i)$ of a node $v_i$ by $\weight(v_i) \cdot 2^w + i$.
We remark that all standard operations on a $2w$-bit word RAM can be simulated by a constant number of $w$-bit operations.
We store all node weights in $\tilde T$ in a predecessor data structure $\tilde V$ from \Cref{lem:predecessor},
supporting constant time rank and select queries.
Additionally, we store in each node $v$ of $\tilde T$ a bitvector $b(v)$ of length $s \le O(w)$ (with 0-based indexing)
whose $i$-th bit is one if and only if $\select(\tilde V, i)$ is the weight of an ancestor of $v$.
These bitvectors can be computed in linear time:
The bitvector $b(v)$ can be obtained from the bitvector of its parent by setting the bit at position $\rank(\tilde V, \weight(v))$ to one.
To answer a weighted ancestor query $(u,k)$ in $\tilde T$
we compute $i = \rank(\tilde V, k)$, compute the largest $j \ge i$ with $b(u)[j] = 1$
and compute $\select(\tilde V, j)$, from which we can retrieve the identifier of the answer node.
Here the number $j$ is obtained by zeroing out all but the first $i$ least significant bits in $b$ and computing the most significant bit,
which can be computed using multiplication \cite{FredmanW93}.
This concludes the proof.
\end{proof}

\weightedancestors*

\begin{proof}
Initially, we sort the $q$ queries by their weights.
Next, we construct the $w$-decomposition of $T$.
For each leaf $u$ of $T$ belonging to a bottom tree, we store the root $\bottomnode(u)$ of its bottom tree.
For a query $(u,k)$, we first check if $\weight(u')< k$, where $u'$ is the parent of $\bottomnode(u)$.
If this is the case, then the query reduces to a weighted ancestor query in the bottom tree containing $u$. Each bottom
tree is of size $O(w)$, so we we can preprocess all bottom trees in $O(s)$ time and space 
with \Cref{lem:micro-wa} for answering such a query in constant time.

The remaining case is that $k \geq \weight(u')$. Then the query reduces to a query on $u'$.
Observe that $u'$ is an implicit
or explicit node of the top tree. For all such queries, we first issue a weighted ancestor query
on the top tree to find the nearest ancestor $v$ of $u'$ such that $\weight(v)\leq k$. In case when $u'$
is not an explicit node there, we need to access any leaf in the subtree rooted at $u'$ in the top tree.
Such information can be computed and stored together with $\bottomnode(u)$, and then
we can issue the query for the leaf in the top tree instead. All such queries are answered together in
$O(q+s/w)$ total time as explained in~\cite[Lemma~7.2]{KociumakaKRRW20} 
since the top tree has size $O(s/w)$ and we initially sorted the queries.
This gives us, for every such query, a node $v$ of the top tree such that $k\geq \weight(v)$ but for the parent $v'$ of $v$
in the top tree we have $\weight(v') < k$.  Both $v$ and $v'$ are explicit nodes of the top tree and hence also
explicit nodes of $T$. However, in $T$ we are not guaranteed that $v'$ is the parent of $v$. In such a case,
the edge $(v,v')$ of the top tree corresponds to a longer path $v=v_{0}-v_{1}-\ldots v_{\ell+1}=v'$ in $T$, where
$v_{1},v_{2},\ldots,v_{\ell}$ have one child each. Now it remains to find $i$ such that $\weight(v_{i})< k \leq \weight(v_{i+1})$,
i.e. a predecessor query on weights of the nodes on the path. The paths corresponding to different edges of the top
tree are edge-disjoint, hence all such path lengths $\ell$ sum up to at most $s$.
Further, each path is of length at most $w$ by the properties of $w$-decomposition by the following
argument. Consider the inner nodes of the path together with the nodes in all subtrees attached to the
inner nodes and hanging to the left of the path.
Those nodes form a contiguous fragment in the preorder traversal of $T$,
so if there are at least $w$ of them then at least one is chosen. But then, together with any chosen node in
the subtree rooted at the bottom node of the path, this gives us another node of $T'$ among the inner
nodes of the path, a contradiction.
For each edge, we construct and store a separate
predecessor structure implemented with \Cref{lem:predecessor} storing the weights of all nodes on the path.
The overall size and construction time of all those structures is $O(s)$.
Then, each of the remaining queries can be answered by directly in constant time querying the predecessor
structure of the found edge of the top tree.
Thus, the total time is $O(q+s/w)$.
\end{proof}

\section{Substring concatenation}

\label{sec:substringconcat}

The goal of this section is to show \Cref{thm:substringconcat}, i.e. how to preprocess the pattern $p[1..m]$ in $O(m)$ time and space
such that $q$ substring concatenation queries can be answered in time $O(q + m / w)$.
Recall that a substring concatenation query asks,
given two substrings $u$ and $v$ of $p$, check if $uv$ occurs in $p$, and
if so return its occurrence.

We will need two other types of queries: rooted and unrooted LCP queries, see e.g.~\cite{ColeGL04}. Both operate on an arbitrary
compacted trie $T$ storing a set $S$ of suffixes of $p[1..m]$. Given a substring $u = p[i..j]$ of $p$, specified by the pair $(i,j)$,
the rooted LCP query returns the location in $T$ where the search for $u$ starting from the root terminates.
The location is either an explicit node or an implicit node. The unrooted LCP query is additionally given a node
(explicit or implicit) $v\in T$ and returns the location in $T$ where the search for $u$ starting from $v$ terminates.

\begin{lemma}
\label{lem:rootedLCP}
A compacted trie $T$ storing a set $S$ of suffixes of $p[1..m]$ can be preprocessed in $O(|S|)$ time and space
so that $q$ rooted LCP queries can be answered in time $O(q + |S| / w)$ and one call to sorting $q$ integers from $[m]$.
\end{lemma}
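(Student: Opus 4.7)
The strategy is to reduce a rooted LCP query for a substring $u_k = p[i_k..j_k]$ to (i) a batched predecessor query in the list of lexicographic ranks of the suffixes stored in $T$, (ii) a constant number of LCP queries on $p$, and (iii) a batched weighted ancestor query on $T$.

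During preprocessing, I would traverse $T$ in lexicographic DFS order (children ordered by the first character of their edge label) and read off the list $i_1, i_2, \dots, i_{|S|}$ of starting positions of the suffixes in $S$ in lex-order. The list $L' = (\isa[i_1], \dots, \isa[i_{|S|}])$ is then sorted by construction and I would equip it with the batched predecessor structure of \Cref{lem:batchedpred}. I would also preprocess $T$ with the weighted ancestor structure of \Cref{thm:weightedancestors}, and use the suffix tree of $p$ from the preliminaries for constant-time LCP queries between two suffixes of $p$. All of this fits into $O(|S|)$ time and space.

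For a batch of $q$ queries, I would compute $\rho_k = \isa[i_k]$ for each query and sort the queries by $\rho_k$; this is the single external call to sorting $q$ integers from $[m]$ allowed by the statement. A batched predecessor on $L'$ then returns, for every query, the two lex-adjacent $S$-leaves $\ell^-, \ell^+$ of $T$ surrounding the (possibly absent) suffix $p[i_k..m]$. Two constant-time LCP queries on $p$ yield $d^{\pm} = \min(|u_k|, \mathrm{LCP}(p[i_k..m], p[i_{\ell^\pm}..m]))$, and the location of $u_k$ in $T$ is at string depth $D_k = \max(d^-, d^+)$ on the path from the root to the leaf $\ell^*_k \in \{\ell^-, \ell^+\}$ achieving the maximum. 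A single batched weighted ancestor invocation (\Cref{thm:weightedancestors} applied to $T$, whose size is $|S|$) translates every $(\ell^*_k, D_k)$ into the desired explicit or implicit location of $T$; this contributes $O(q + |S|/w)$ plus its own internal sort of the same form, which is absorbed into the statement.

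The main obstacle is justifying the combinatorial reduction to two LCP comparisons. Since $\ell^-$ and $\ell^+$ are lex-adjacent $S$-leaves of $T$, no other $S$-suffix can separate them in lex order; a would-be location for $u_k$ below string depth $\max(d^-,d^+)$ would force $u_k$ to agree with some intermediate $S$-suffix past the LCA depth of $\ell^-$ and $\ell^+$, contradicting adjacency. Some care is required for corner cases: when $i_k \in S$ one of the ``neighbours'' is $p[i_k..m]$ itself and $D_k = |u_k|$ trivially, while when no $S$-suffix starts with $u_k$ both LCPs equal the depth of the LCA of $\ell^-, \ell^+$ in $T$, which is exactly where the search terminates. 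Once this reduction is verified, the complexity bounds are immediate from \Cref{lem:batchedpred}, the constant-time LCP primitive on $p$, and \Cref{thm:weightedancestors}.
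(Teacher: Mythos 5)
Your proposal matches the paper's proof in all essential respects. Both reduce a batched rooted LCP query to the same three ingredients: (i) sort the queries by $\isa[i_k]$ and run the batched predecessor structure of \Cref{lem:batchedpred} on the sorted list of leaf ranks of $T$ to locate the two lex-adjacent suffixes in $S$; (ii) two constant-time LCP queries on $p$ against those neighbours; (iii) a batched weighted ancestor query (\Cref{thm:weightedancestors}) on $T$ to land on the node at the computed string depth. The only cosmetic difference is that you cap the LCP values at $|u_k|$ before issuing the weighted ancestor query, merging the paper's "find the node for the full suffix $p[i_k..m]$, then cap at string depth $|u_k|$" into a single step; the batching and the complexity accounting are unaffected. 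Your justification of the ``two-neighbour'' reduction is sketchy but captures the right invariant (LCP with a lex-sorted list is maximised at an adjacent element), which the paper also states without proof.
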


\begin{proof}

The preprocessing of $T$ consists of two parts.
First, we traverse the leaves of $T$ in left-to-right order to obtain a sorted list of the ranks
$\{ \isa[i] \mid p[i..m] \in S \}$ of all suffixes in $S$.
Observe that the leaves in $T$ is indeed sorted lexicographically since removing a common prefix of two strings preserves the lexicographical order.
We apply the preprocessing from \Cref{lem:batchedpred} on this list.
Second, we preprocess $T$ with \Cref{thm:weightedancestors}.

To answer a single rooted LCP query for a suffix $p[j..m]$ we find its lexicographical predecessor $p[i..m]$ and successor $p[i'..m]$
among the suffixes in $S$.
Then, we compute the length $\ell$ of the longest common prefix of $p[i..m]$ and $p[j..m]$, and 
the length $r$ of the longest common prefix of $p[i'..m]$ and $p[j..m]$ in constant time.
If $\ell > r$ then the sought node is an ancestor at string depth $\ell$ of the leaf
corresponding to $p[i..m]$, and otherwise it is an ancestor at string depth $r$ of the leaf corresponding
to $p[i'..m]$. %

To answer a batch of $q$ rooted LCP queries concerning substrings $u_{1} = p[i_1..j_1],\ldots,u_{q}=p[i_q..j_q]$, we proceed as follows.
Instead of answering a rooted LCP query for $u_t$, we answer a rooted LCP query for $p[i_{t}..m]$.
If the string depth of the found node is at most $|u_{t}|$ then we return it as the answer, otherwise we need to find
its ancestor at string depth $|u_{t}|$. This can be done with a weighted ancestor query (all such queries are batched).
To compute the predecessor and successor of each $p[j_{t}..m]$ on the sorted list of all suffixes in $S$
we proceed as follows. We sort the ranks $\isa[i_1], \dots, \isa[i_q]$ with one call to sorting
$q$ integers from $[m]$ and issue a batched query to the structure storing a sorted list of ranks of all suffixes in $S$ (\Cref{lem:batchedpred}).
This gives us the predecessor and the successor of $p[i_{t}..m]$ on the sorted list of all suffixes in $S$, for every $t$,
which can be used to obtain the answer to the original rooted LCP query as described earlier with
two lcp queries and a weighted ancestor query (again, all such queries are batched).
\end{proof}

We present the standard reduction from unrooted LCP queries on a trie of size $|S|$
to rooted LCP queries on multiple tries of total size $O(|S| \log |S|)$, see~\cite[Section~5]{ColeGL04}.
We show that the same idea can be used for batched LCP queries.

\begin{lemma}
\label{lem:unrootedLCP}
A compacted trie $T$ storing a set $S$ of suffixes of $p[1..m]$ can be preprocessed in $O(|S|)$ time and space
so that $q$ unrooted LCP queries can be answered in time $O(q + |S| \log |S|/ w)$ and one call to sorting $q$ integers from $[m]$.
\end{lemma}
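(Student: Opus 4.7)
My plan is to reduce unrooted LCP queries on $T$ to rooted LCP queries on a family of auxiliary compacted tries using the heavy-path decomposition technique of Cole, Gottlieb, and Lewenstein~\cite[Section~5]{ColeGL04}, and to show that the reduction respects the batching constraints. First I would compute a heavy-path decomposition of $T$ in $O(|S|)$ time; every node of $T$ lies on a unique heavy path and every root-to-leaf path crosses $O(\log |S|)$ heavy paths. For each heavy path $H$ with top node $t_H$, I would define an auxiliary compacted trie $T_H$ whose leaves correspond to the leaves of $T$ in the subtree rooted at $t_H$, with each suffix shifted so that $t_H$ becomes the root of $T_H$; a rooted LCP query on $T_H$ then simulates the search that would start at $t_H$ in $T$. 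Since each leaf of $T$ lies below $O(\log |S|)$ heavy-path tops, $\sum_H |T_H| = O(|S|\log|S|)$; each $T_H$ is represented implicitly via sorted ranges of the suffix array of $p$, so the implicit construction fits in the $O(|S|\log|S|/w)$ per-batch budget.

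Next I would preprocess each $T_H$ via \Cref{lem:rootedLCP}. For the batched query procedure, I would sort the $q$ queries once by $\isa[i_t]$, where $u_t = p[i_t..j_t]$ — this is the single allowed sorting call. Each query $(v_t, u_t)$ with $v_t$ on heavy path $H_t$ is translated into a rooted LCP query on $T_{H_t}$ for $u_t$, using one constant-time LCE query on $p$ to account for the offset of $v_t$ below $t_{H_t}$; this is sound because the string spelled by any heavy path is a substring of $p$. Grouping the rooted LCP subqueries by auxiliary trie, the routing preserves the $\isa[i_t]$ ordering, so the sub-batches inherit the sort and can be fed to \Cref{lem:rootedLCP} without a fresh sort. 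Summing across all $H$ gives total time $\sum_H O(q_H + |T_H|/w) = O(q + |S|\log|S|/w)$.

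The main obstacle I expect is obtaining only $O(1)$ rooted LCP subqueries per original unrooted LCP query: a naive heavy-path traversal would emit $O(\log |S|)$ subqueries per query and inflate the query term to $O(q \log |S|)$. The Cole--Gottlieb--Lewenstein design circumvents this by contracting the heavy path below $t_H$ into a single edge in $T_H$, so that one rooted LCP on $T_H$ simultaneously resolves both the descent along $H$ and the first light-edge divergence. Verifying that this one-subquery-per-query reduction survives the batched setting — so that all emitted rooted LCP queries are (i) bounded in number by $O(q)$ and (ii) presented to \Cref{lem:rootedLCP} already sorted by $\isa[i_t]$, without consuming any additional sorting call — is the key technical point.
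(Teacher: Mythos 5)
Your proposal uses the same high-level idea as the paper (heavy-path decomposition reducing unrooted to rooted LCP queries), and the total-size estimate $\sum_H |T_H| = O(|S|\log|S|)$ is correct. However, there is a genuine gap in the query-translation step.

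You build one auxiliary trie $T_H$ per heavy path, rooted at the path's \emph{top} node $t_H$, and want to answer the unrooted query $(v_t,u_t)$ with a \emph{single} rooted LCP query on $T_{H_t}$ ``for $u_t$, using one constant-time LCE query to account for the offset of $v_t$ below $t_{H_t}$.'' This cannot work as stated. A rooted LCP query on $T_{H_t}$ necessarily starts matching at the root $t_{H_t}$, while the unrooted query starts at $v_t$; these two searches descend into entirely unrelated parts of the subtree, and a single LCE on $p$ cannot translate one location into the other. The natural fix---search instead for $w u_t$ where $w$ is the heavy-path label from $t_{H_t}$ to $v_t$---also fails, because \Cref{lem:rootedLCP} requires the query to be a single substring $p[i..j]$, and $w u_t$ is not in general a substring of $p$ (indeed, if it were, the unrooted query would be trivial). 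Noting that each of $w$ and $u_t$ individually is a substring of $p$, as you do, does not resolve this.

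The paper's proof separates the two phases that your single subquery tries to conflate, and needs a step you omit entirely: a \emph{weighted ancestor query} in $T$. Concretely, it (i) computes via one LCE on $p$ how far $u$ continues to match the heavy-path continuation from $v$---this works precisely because that continuation is a suffix of $p$ and $u$ is a substring of $p$; (ii) uses a weighted ancestor query (\Cref{thm:weightedancestors}) to jump to the last heavy-path node $v'$ visited; and (iii) only if $v'$ is explicit, issues a rooted LCP query on the \emph{light subtree} $T_{v'}$ (the per-\emph{node} trie of $v'$'s light children, not your per-heavy-path trie), with the remaining suffix of $u$---which \emph{is} a substring of $p$, and for which $v'$ is genuinely the root. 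Without the weighted ancestor step and the per-node light-subtree decomposition, your reduction has no mechanism to land the search at $v_t$ or to hand a valid substring-of-$p$ query to \Cref{lem:rootedLCP}. Your concern about the sorting budget is a secondary issue: since your remaining-suffix queries would start at shifted positions $i_t + \delta_t$, sorting by $\isa[i_t]$ up front does not automatically give the order \Cref{lem:rootedLCP} needs; the lemma's allowance of one sorting call is meant to be spent on the shifted ranks.
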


\begin{proof}
We first define the heavy path decomposition
of an arbitrary tree $T$ on $n$ leaves as follows. For each non-leaf node $u\in T$, we select its child $v\in T$ with the largest
number of leaves in its subtree, and call $v$ the heavy child of $u$, while all other children of $u$ are called light. This decomposes
the nodes of $T$ into node-disjoint paths terminating at leaves, {\em called heavy paths}. The crucial property is that any root-to-leaf path
intersects at most $\log n$ heavy paths. Now consider a compacted trie $T$ storing a set of suffixes $S$ of $p$. We find the heavy
path decomposition of $T$, and note that each heavy path corresponds to a suffix of $p$ (but not necessarily belonging to $S$).
For each node $u\in T$, we create another compacted trie $T_{u}$, called the light subtree of $u$, by extracting the subtree of $u$
but without the edge from $u$ to its heavy child $v$ and the subtree of $v$. In other words, we gather all suffixes corresponding
to the leaves in the subtrees rooted at the light children of $u$, shorten each such suffix by removing the first $d$ characters,
where $d$ is the string depth of $u$, and arrange the truncated suffixes in a compacted trie.
Since a node $v$ is only properly contained in those tries $T_u$ where $u$ has a light child that is an ancestor of $v$
we have $\sum_{u\in T} |T_{u}|=O(|S|\log |S|)$. It is easy to construct all compacted tries $T_{u}$
in $O(|S|\log |S|)$ time by first computing the heavy path decomposition in $O(|S|)$ time, and then extracting the
appropriate subtrees of $T$ in time proportional to their sizes. Now the reduction from unrooted LCP queries to
rooted LCP queries proceeds as follows. We retrieve the heavy path $h$ containing $v$. We compute how far along $h$
we should continue when searching for $u$, this can be done by computing the longest common prefix of two suffixes
of $p$ in constant time. Then, we jump to the last node $v'$ of $h$ that would be visited when searching for $u$, this can be done with
a weighted ancestor query. The latter can be done in $O(q + |S| \log |S|/w)$ time for all $q$ queries
using \Cref{thm:weightedancestors}.
If $v'$ is an implicit node, we are done. Otherwise, we retrieve $T_{v'}$ and issue a rooted
LCP query with the remaining suffix of $u$.
These LCP queries are answered in $O(q + |S| \log |S|/w)$ time using \Cref{lem:rootedLCP}.
\end{proof}

We apply \Cref{lem:unrootedLCP} to a smaller tree on $m/\log m$ leaves which is obtained by decomposing the suffix tree of $p$ with
parameter $x=\log m$. This will allow us to reduce finding the node of the suffix tree corresponding to $uv$
to a rooted LCP query in one of the bottom trees, assuming that $u$ occurs at least $\log m$ times in $p$.
However, we need to design a separate mechanism
for answering queries with $u$ occurring less than $\log m$ times in $p$.

\begin{lemma}
\label{lem:unrootedLCPfew}
The pattern $p[1..m]$ can be preprocessed in $O(m)$ time and space, so that given any substrings $u=u'au''$ and $v$ of $p$,
together with the (explicit or implicit) nodes of the suffix tree of $p^{\R}$ corresponding to $u'$ and $au''$,
and the (explicit or implicit) node of the suffix tree of $p$ corresponding to $v$,
and under an additional assumption that $u''$ is the longest suffix of $u$ that occurs at least $\log m$ times in $p$,
we can check if $uv$ occurs in $p$, and, if so, return its occurrence, in constant time.
\end{lemma}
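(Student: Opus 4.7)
The plan is to reduce the question to a two-dimensional range-reporting problem on the small set of occurrences of $au''$ in $p$, and to answer such queries in $O(1)$ time using bit-packed tables attached to well-chosen nodes of the suffix tree of $p^{\R}$.

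First, I will identify the occurrences of $au''$ in $p$. Since $u''$ is the longest suffix of $u$ that occurs at least $\log m$ times, $au''$ occurs fewer than $\log m$ times; hence the subtree of the $au''$-node in the suffix tree of $p^{\R}$ contains fewer than $\log m$ leaves, each corresponding to exactly one occurrence $i$ of $au''$ in $p$. For such an occurrence $i$, the concatenation $u'au''v$ appears at position $i - |u'|$ in $p$ iff two independent constraints hold: (a)~$p[i - |u'|..i-1] = u'$ and (b)~$p[i + |au''|..i + |au''| + |v| - 1] = v$. Constraint (a) is equivalent to the statement that the leaf of the suffix tree of $p^{\R}$ for starting position $m - i + 2$ in $p^{\R}$ lies in the subtree of the given $u'$-node, and (b) is equivalent to the statement that the leaf of the suffix tree of $p$ for starting position $i + |au''|$ lies in the subtree of the given $v$-node. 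Since subtrees correspond to preorder intervals, the query becomes: given up to $\log m$ 2D points (preorder in the $p^{\R}$-suffix tree, preorder in the $p$-suffix tree), does any lie in the rectangle defined by the $u'$- and $v$-subtree intervals, and if so, report one.

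Second, I will arrange the precomputation so that the total space is $O(m)$. Call a node of the suffix tree of $p^{\R}$ \emph{sparse} if its subtree has fewer than $\log m$ leaves, and \emph{maximal sparse} if it is sparse but its parent is not. Maximal sparse subtrees are pairwise disjoint, so their total leaf count is at most $m$. For each maximal sparse node I will store the list of 2D points (one per leaf in its subtree), sorted by the leaf's preorder position and bit-packed; each coordinate uses $O(\log m)$ bits, so the total storage is $O(m \log m)$ bits, that is, $O(m)$ words because $w \ge \log m$.

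Third, I will answer each query in $O(1)$. If the $au''$-node is maximal sparse, I perform the rectangle-reporting directly on its bit-packed point list using word-level parallel comparisons against the interval endpoints. For a non-maximal sparse $au''$-node $x$, I locate its maximal sparse ancestor $x^{\star}$; the leaves of $x$ form a contiguous preorder subinterval of the point list at $x^{\star}$, so the query reduces to rectangle reporting restricted to a subinterval of that list, handled by the same bit-parallel primitives plus a precomputed mapping from $x$ to its $(x^{\star}, \text{interval})$ pair. Once a valid point is identified, the corresponding occurrence $i$ of $au''$ and hence the occurrence $i - |u'|$ of $uv$ is read off in $O(1)$.

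The main obstacle will be the third step: realising the constant-time rectangle-reporting routine on up to $\log m$ bit-packed 2D points (subject to a subinterval constraint) using only $O(m)$ global preprocessing on a word-RAM with $w \ge \log m$. The leverage is that each maximal sparse list occupies only $O(\log^2 m)$ bits and can be preprocessed in time linear in its size, so the costs amortise to $O(m)$ over all maximal sparse subtrees and the resulting word-level comparators give true $O(1)$ query time.
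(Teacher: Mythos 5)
Your high-level reduction is the same as the paper's: the fewer than $\log m$ occurrences of $au''$ become two-dimensional points, where one coordinate records (in suffix-array rank) the suffix starting immediately after the occurrence, to be matched against the interval of leaves under the $v$-node, and the other coordinate records (in $p^{\R}$'s suffix-array rank) the reversed suffix starting immediately before the occurrence, to be matched against the interval under the $u'$-node; and attaching the point sets to "maximal sparse" nodes is, up to moving the link from an implicit node to its explicit descendant, the paper's scheme, with disjoint groups summing to $O(m)$ points. The genuine gap is the one you yourself flag as "the main obstacle" and then wave away. A list of up to $\log m$ points whose coordinates take $\Theta(\log m)$ bits each (or $\Theta(\log\log m)$ bits after a rank-space reduction) occupies $\Theta(\log^2 m)$ (resp.\ $\Theta(\log m\log\log m)$) bits, which is $\omega(1)$ machine words when $w = \Theta(\log m)$; a constant number of word-parallel comparisons cannot examine such a list, and you give no mechanism for pushing the per-query cost down to $O(1)$. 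As written, your routine runs in $\Theta(\log m)$ time per query, which would leak a logarithmic factor into the overall bound.

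The paper closes precisely this gap with an additional geometric layer. After reducing to rank space, the $s<\log m$ points live in $[s]\times[s]$. That grid is partitioned into $\sqrt{s}\times\sqrt{s}$ boxes of side $\sqrt{s}$, so a rectangle query decomposes into at most two horizontal slices, at most two vertical slices, and a middle rectangle aligned with box boundaries. Each slice (after a further rank-space reduction), as well as the $\sqrt{s}\times\sqrt{s}$ box-occupancy grid, is a point set on a $[\sqrt{s}]\times[\sqrt{s}]$ universe and is therefore representable as an $s$-bit occupancy bitmap, one bit per grid cell; since $s\le\log m\le w$, this fits in one machine word and the rectangle test is a constant number of AND/shift operations against precomputed row and column masks, with a retrieved bit decoded back to an original point. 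The total size of these derived instances is $O(s)$ per group and $O(m)$ overall, so the preprocessing stays linear. It is this two-level shrinking to a single-word occupancy grid, not a plain packed coordinate list, that delivers $O(1)$ query time; your proposal is missing that idea or an equivalent replacement.
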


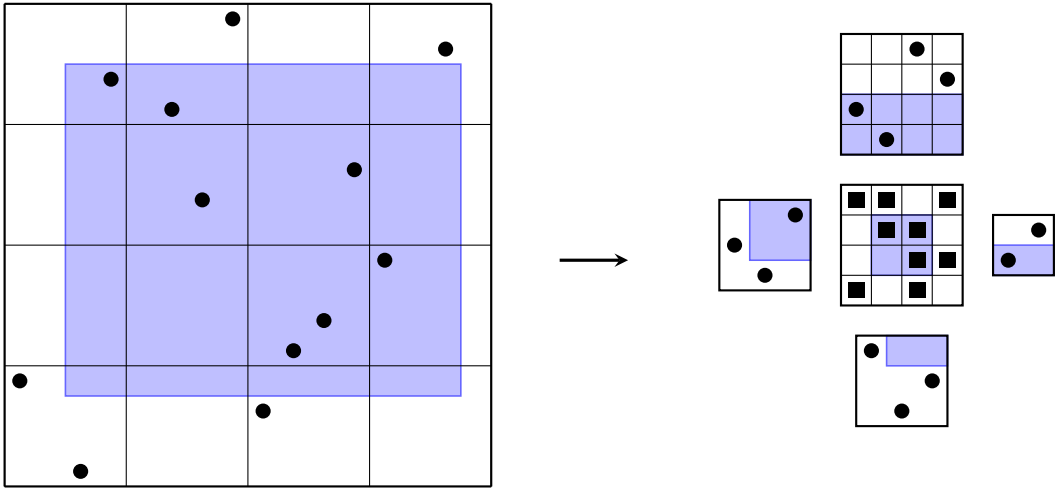
\begin{figure}

\centering

\begin{tikzpicture}

\tikzstyle{rng}=[fill=blue!50, semithick, draw=blue!60, fill opacity = 0.5]

\draw[rng] (0.8,1.2) rectangle (6,5.6);
\draw[rng] (11.4,2.8) rectangle (12.2,3.6);
\draw[rng] (11.0,4.4) rectangle (12.6,5.2);
\draw[rng] (9.8,3.0) rectangle (10.6,3.8);
\draw[rng] (13.0,2.8) rectangle (13.8,3.2);
\draw[rng] (11.6,1.6) rectangle (12.4,2.0);

\tikzstyle{point}=[circle,inner sep = 2pt,fill]

\draw[thick] (0,0) -- (0,6.4);
\draw[thick] (6.4,0) -- (6.4,6.4);
\draw[thick] (6.4,6.4) -- (0,6.4);
\draw[thick] (0,0) -- (6.4,0);
\foreach \x in {0,1.6,3.2,4.8,6.4} {
	\draw (\x,0) -- (\x,6.4);
	\draw (0,\x) -- (6.4,\x);
}
\node[point] at (0.2,1.4) {};
\node[point] at (1.0,0.2) {};
\node[point] at (1.4,5.4) {};
\node[point] at (2.6,3.8) {};
\node[point] at (4.2,2.2) {};
\node[point] at (3.8,1.8) {};
\node[point] at (4.6,4.2) {};
\node[point] at (5.0,3.0) {};
\node[point] at (3.4,1.0) {};
\node[point] at (3.0,6.2) {};
\node[point] at (5.8,5.8) {};
\node[point] at (2.2,5.0) {};

\draw[thick] (11,2.4) -- (11,4.0) -- (12.6,4.0) -- (12.6,2.4) -- cycle;
\foreach \x in {11.4,11.8,12.2} {
	\draw (\x,2.4) -- (\x,4.0);
}
\foreach \x in {2.8,3.2,3.6} {
	\draw (11,\x) -- (12.6,\x);
}
\tikzstyle{bp}=[rectangle,inner sep = 3pt,fill]
\node[bp] at (11.2,2.6) {};
\node[bp] at (12,2.6) {};
\node[bp] at (12,3) {};
\node[bp] at (12.4,3) {};
\node[bp] at (12,3.4) {};
\node[bp] at (11.6,3.4) {};
\node[bp] at (11.2,3.8) {};
\node[bp] at (11.6,3.8) {};
\node[bp] at (12.4,3.8) {};

\draw[->,>=stealth,very thick] (7.3,3) -- (8.2,3);

\draw[thick] (11,4.4) -- (11,6.0) -- (12.6,6.0) -- (12.6,4.4) -- cycle;
\foreach \x in {11.4,11.8,12.2} {
	\draw (\x,4.4) -- (\x,6.0);
}
\foreach \x in {4.8,5.2,5.6} {
	\draw (11,\x) -- (12.6,\x);
}
\node[point] at (11.2,5.0) {};
\node[point] at (11.6,4.6) {};
\node[point] at (12,5.8) {};
\node[point] at (12.4,5.4) {};

\draw[thick] (9.4,2.6) -- (9.4,3.8) -- (10.6,3.8) -- (10.6,2.6) -- cycle;
\node[point] at (9.6,3.2) {};
\node[point] at (10,2.8) {};
\node[point] at (10.4,3.6) {};

\draw[thick] (11.2,0.8) -- (11.2,2) -- (12.4,2) -- (12.4,0.8) -- cycle;
\node[point] at (11.4,1.8) {};
\node[point] at (11.8,1) {};
\node[point] at (12.2,1.4) {};

\draw[thick] (13,2.8) -- (13,3.6) -- (13.8,3.6) -- (13.8,2.8) -- cycle;
\node[point] at (13.2,3.0) {};
\node[point] at (13.6,3.4) {};

\end{tikzpicture}

\caption{Reducing a range emptiness query over $[s] \times [s]$ to queries over $[\sqrt{s}] \times [\sqrt{s}]$.}
\label{fig:grid}

\end{figure}

\begin{proof}
By traversing the suffix tree of $p^{\R}$ we can compute in $O(m)$ time, for each explicit node $u$, the number of leaves
in its subtree, which is equal to the number of occurrences of
its corresponding string in the whole $p$. 
Then, with another traversal we can determine in $O(m)$ time,
for each $j=1,2,\ldots,m$ such that $p[1..j]$ occurs fewer than $\log m$ times in the whole $p$, the largest $i_{j}\leq j$
such that $p[i_{j}..j]$ occurs fewer $\log m$ times in the whole $p$. By definition of $i_{j}$, $p[(i_{j}+1)..j]$
occurs at least $\log m$ in the whole $p$ (possibly, $i_{j}=j$ and then this is the empty string).

We group identical substrings $p[i_{j}..j]$ together, for $j$ such that $i_{j}$ is defined. This can be done
by first
locating their corresponding (explicit or implicit) nodes in the suffix tree of $p^{\R}$ with \Cref{thm:weightedancestors} in $O(m+m/w)$ time,
and then radix-sorting the identifiers of the found nodes in $O(m)$ total time.
By the choice of $i_{j}$,
each group consists of fewer than $\log m$ occurrences, and the overall number of occurrences in all groups
is of course at most $m$. We will construct a separate structure for each such group, and link to it
from the corresponding node. Notice that, while this node might be implicit, any edge of the suffix tree
of $p^{\R}$ contains at most such node (as its ancestor at string depth smaller by 1 must have a larger number
of occurrences, and hence be an explicit node). Therefore, we can store those links in such a way that
later, given an implicit or explicit node of the suffix tree of $p^{\R}$ corresponding to $au''$, we can access
the group built for all occurrences of $au''$ in constant time. It remains to describe how to preprocess
each group in time and space linear in its size so that it can be queried in constant time.

Suppose such a group consists of (identical) substrings $p[i_{j_{k}}..j_{k}]$, for $k=1,2,\ldots,s$.
In preprocessing time we compute the suffix array $\sa^\R[1..m]$ and the inverse suffix array $\isa^\R[1..m]$ with respect to $p^\R$.
We now define a 2-dimensional range emptiness problem and then explain how it is connected to substring concatenation.
For each $k=1,2,\ldots,s$, we create a point $(x,y)$, where $x = \isa[j_k+1]$
and $y = \isa^\R[i_{j_k}-1]$.
This gives us a set $S$ of points from $[m]\times [m]$. Given $v$, we retrieve
the range $[x_{1}..x_{2}]$ (in the rank space) of the leaves of the suffix tree of $p$ in the subtree of the (explicit or implicit) node
corresponding to $v$. Similarly, given $u'$ we retrieve the range $[y_{1}..y_{2}]$ (again, in the rank space) of the leaves of the suffix tree
of $p^\R$ in the subtree of the (explicit or implicit) node corresponding to $u'$.
This can be preprocessed in $O(m)$ time and space for every explicit node of the suffix tree of $p$
and its reversal (and for implicit nodes we access the range of its nearest explicit descendant),
and then retrieved in constant time assuming that the corresponding nodes of both suffix trees
are given. We now observe that to find an occurrence of $u'au''v$ in $p$ we only need to check
if there is a point $(x,y) \in S\cap ([x_{1}..x_{2}]\times [y_{1}..y_{2}])$, and if so retrieve any such point.
From such a point $(x,y)$, we can retrieve the substring occurrence $p[i..j]$ of $au''$ in $p$
by $j = \sa[x]-1$ and $i = \sa^\R[y] + 1$, and thus $uv$ occurs at position $i-|u'|$ in $p$.
It remains to show how to preprocess any set of less than $\log m$ points in linear time and
space for such range emptiness queries.

Consider a set of points $(x_{i},y_{i})\in [m]\times [m]$, for $i=1,2,\ldots,s$ and $s\leq \log m \le w$.
By perturbing the coordinates we can assume that they are pairwise distinct.
We construct and store predecessor structures from \Cref{lem:predecessor} for the $x$
and $y$ coordinates. This allows us to reduce in constant time the query to the rank space,
namely to querying a set of points from $[s]\times [s]$.
Next we will further reduce the grid size to $[\sqrt{s}] \times [\sqrt{s}]$,
see~\Cref{fig:grid}.
We split the $[s]\times[s]$ into
boxes of size $[\sqrt{s}]\times[\sqrt{s}]$. Each horizontal or vertical slice of $\sqrt{s}$
boxes contains at most $\sqrt{s}$ points. Retrieving a point in a rectangle $[x_{1}..x_{2}]\times [y_{1}..y_{2}]$
reduces to retrieving a point in at most two horizontal slices, at most two vertical slices,
and the remaining middle part consisting of complete boxes. For each horizontal and
vertical slice, we again apply reduction to rank space to obtain a set of at most $\sqrt{s}$
points from $[\sqrt{s}]\times [\sqrt{s}]$. The total size of all sets in the new instances is $O(s)$.
For the middle part, we create a set of at most $s$ points from $[\sqrt{s}]\times [\sqrt{s}]$
corresponding to boxes that contain at least one point from $S$. Thus, a query reduces
to constant number of queries on sets of points from $[\sqrt{s}]\times [\sqrt{s}]$, with
the total size of all sets being $O(s)$. We work with an encoding of such a grid
in a single machine word of $s\leq w$ bits obtained by simply concatenating all the rows.
Thus, the set can be stored in a single machine word with a bit set to 1
if and only if the corresponding point belongs to the set. For each $x$ coordinate,
we store a bitmask that allows us to filter points $[x]\times [\sqrt{s}]$, and similarly
for each $y$ coordinate. Together, this allows us to filter points in $[x_{1}..x_{2}]\times [y_{1}..y_{2}]$
with a constant number of standard bitwise operations. This allows us to check in constant
time if the set contains some point from $[x_{1}..x_{2}]\times [y_{1}..y_{2}]$. We can
then retrieve the coordinates of one such point with a constant number of standard bitwise
operations.
\end{proof}

\substringconcat*

\begin{proof}
We start with constructing and storing an $\log m$-decomposition of the suffix tree of $p$. Recall that
the top tree is a compacted trie built for $m/\log m$ suffixes of $p$.
We preprocess the top tree with \Cref{lem:unrootedLCP}, which allows us to answer $q$ unrooted LCP queries
on the top tree in $O(q + m/w)$ time.
Given a batch of substring concatenation queries $(u_1,v_1), \dots, (u_q,v_q)$,
we proceed as follows. First, for every $t$
we locate the node of the suffix tree of $p$ corresponding to $u_t$. This is done using \Cref{thm:weightedancestors}
in $O(q+m/w)$ total time.
For every node $u_t$ which belongs to the top tree we search for $v_t$ in the top tree using unrooted LCP queries
in total time $O(q+m/w)$.
If the (possibly implicit) answer node is at string depth $|u_t v_t|$ we have found the node for the concatenation $u_t v_t$.
Otherwise, we follow the outgoing edge to a bottom tree, labeled by the next character in $v_t$, if any,
and continue searching for a suffix of $v_t$.
To find these outgoing edges efficiently, we store at each node of the suffix tree the first characters on its outgoing
edges in a structure from \Cref{lem:batchedpred}
and sort the $q$ queries by the particular character in $v_t$ in $O(q+m/w)$ time using \Cref{lem:sorting}.

Finally it remains to search $v_t$ from nodes $u_t$ in the bottom tree.
Since the leaves below $u_t$ correspond to the occurrences of $u_t$ in $p$,
it indeed only has at most $\log m$ occurrences.
Let $u_t = u_t' a_t u_t''$ where $u_t''$ is the longest suffix of $u_t$ that occurs at least $\log m$ times in $p$.
To apply \Cref{lem:unrootedLCPfew} we need the nodes of $u_t'$ and $a_t u_t''$ in the suffix tree of $p^\R$
and the node of $v_t$ in the suffix tree of $p$.
First we locate the node of $u_t$ in the suffix tree of $p^\R$
and the node of $v_t$ in the suffix tree of $p$ using two weighted ancestor queries.
We precompute for every node in the suffix tree of $p^\R$, defining some substring $u$,
the length of its longest suffix which occurs at least $\log m$ times in $p$, as in the proof of \Cref{lem:unrootedLCPfew}.
This allows us to locate the desired nodes for $u_t'$ and $a u_t''$ using two weighted ancestor queries,
in total time $O(q + m/w)$ by \Cref{thm:weightedancestors}.
\end{proof}

\bibliographystyle{plain}
\bibliography{refs.bib}

\end{document}